\newcommand{\mathCommandFont}[1]{\mathrm{#1}}
\newcommand{\logicFont}[1]{\mathcal{#1}}
\newcommand{\classFont}[1]{\mathrm{#1}}
\newcommand{\halfliteral}[1]{\protect\ensuremath{#1}}
\newcommand{\literal}[1]{\halfliteral{#1}\xspace}
\newcommand{\bracketOperator}[3]{\literal{#1\nobreak#3\nobreak#2}}
\newcommand{\commandOperator}[2]{\literal{\mathord{\mathCommandFont{#1}\ifthenelse{\equal{#2}{}}{}{(\nobreak#2\nobreak)}}}}
\newcommand{\set}[3][]{\literal{\left\{#2\;\middle|\;\ifthenelse{\equal{#1}{}}{\text{#3}}{\parbox{#1}{#3}}\right\}}}
\newcommand{\powerset}[1]{\literal{\mathcal{P}(\nobreak#1\nobreak)}}
\newcommand{\ddfn}{\mathrel{\mathop{{\mathop:}{\mathop:}}}=}
\DeclareMathOperator{\N}{\ensuremath{\mathbb{N}}}
\newcommand{\dep}[1][\cdot]{\literal{\mathCommandFont{=}\ifthenelse{\equal{#1}{}}{}{(\nobreak#1\nobreak)}}}
\DeclareMathOperator*{\idis}{\ovee}
\DeclareMathOperator*{\midis}{\text{{\Large\ensuremath{\varovee}}}}
\DeclareMathOperator*{\Idis}{\text{{\huge\ensuremath{\varovee}}}}
\newcommand{\poc}[2][]{\bracketOperator{\left\lgroup}{\right\rgroup}{\ifthenelse{\equal{#1}{}}{#2}{\begin{array}{@{}#1@{}}#2\end{array}}}}
\newcommand{\logic}[1]{\literal{\logicFont{#1}}}
\newcommand{\paraLogic}[2]{\ensuremath{\logic{#1}\ifthenelse{\equal{#2}{}}{}{(#2)}}\xspace}
\newcommand{\ff}{\logic{F}}
\renewcommand{\iff}{\Leftrightarrow}
\newcommand{\kbisim}{\rightleftarrows_{\,k}}
\newcommand{\kBisim}{\;[\rightleftarrows_{\,k}]\;}
\newcommand{\Bisimk}{\;[\rightleftarrows_{\,k-1}]\;}
\newcommand{\bisimK}{\rightleftarrows_{\,k+1}}
\newcommand{\BisimK}{\;[\rightleftarrows_{\,k+1}]\;}
\newcommand{\mBisim}{\;[\rightleftarrows_{\,m}]\;}
\newcommand{\nbisim}{\rightleftarrows_{\,n}}
\newcommand{\nBisim}{\;[\rightleftarrows_{\,n}]\;}
\newcommand{\obisim}{\rightleftarrows_{\,0}}
\newcommand{\class}[1]{\literal{\classFont{#1}}}
\newcommand{\TIME}{\class{TIME}}
\newcommand{\NTIME}{\class{NTIME}}
\newcommand{\ATIME}{\class{ATIME}}
\newcommand{\SPACE}{\class{SPACE}}
\newcommand{\NSPACE}{\class{NSPACE}}
\newcommand{\ASPACE}{\class{ASPACE}}
\newcommand{\NEXP}{\class{NEXP}}
\newcommand{\PSPACE}{\class{PSPACE}}
\newcommand{\calL}{\literal{\mathcal{L}}}
\newcommand{\EMDL}{\mathcal{EMDL}}
\newcommand{\MDL}{\mathcal{MDL}}
\newcommand{\ML}{\mathcal{ML}}
\newcommand{\cK}{\mathcal{K}}
\newcommand{\md}{\mathrm{md}}
\newcommand{\KT}{\mathcal{KT}}
\newcommand{\type}{\mathrm{tp}}
\newcommand{\tset}{\mathrm{Tp}}
\DeclareMathOperator{\Dim}{Dim}
\DeclareMathOperator{\occ}{occ}
\begin{document}

\begin{frontmatter}
  \title{The Expressive Power of Modal Dependence Logic}
  \author{Lauri Hella}\thanks{The research of Lauri Hella was partially funded 
by a Professor Pool grant awarded by the 
{Finnish Cultural Foundation}.
The research of Jonni Virtema was supported by grant 266260 of the Academy of Finland,
and grants by the Finnish Academy of Science and Letters and the University of Tampere.}\footnotemark[1]
  %\address{School of Information Sciences \\ University of Tampere}
  \author{Kerkko Luosto}\footnotemark[1]
  %\address{School of Information Sciences \\ University of Tampere}
  \author{Katsuhiko Sano}\thanks{The research of Katsuhiko Sano
  was partially supported by JSPS KAKENHI, Grant-in-Aid for Young Scientists 
  (B) 24700146.}\footnotemark[2]
  \author{Jonni Virtema}\footnotemark[1]
  \address{$^{1}$School of Information Sciences \\ University of Tampere}
  \address{$^2$School of Information Science \\ Japan Advanced Institute of Science and Technology}
  
  \begin{abstract}
We study the expressive power of various modal logics with team semantics. We show that exactly the properties of teams that are downward closed and closed under team $k$-bisimulation, for some finite $k$, are definable in modal logic extended with intuitionistic disjunction. Furthermore, we show that the expressive power of modal logic with intuitionistic disjunction and extended modal dependence logic coincide. Finally we establish that any translation from extended modal dependence logic into modal logic with intuitionistic disjunction  increases the size of some formulas exponentially. 
  \end{abstract}

\begin{keyword}
Modal dependence logic, team semantics, bisimulation, expressive power
%  Keyword1, keyword2, ..., keyword(n).
\end{keyword}
 \end{frontmatter}

\section{Introduction}
Dependence is a central notion in many scientific disciplines. For example in physics there are dependences in experimental data. Decision theory is concerned with identifying the variables on which the result depends. Furthermore, dependences between attributes is a key notion in database theory. In order to express such dependences in a formal framework, Väänänen \cite{va07} introduced first-order dependence logic. Dependence logic is based on team semantics, in which the truth of formulas is evaluated in sets of assignments instead of single assignments. Team semantics was originally defined by Hodges \cite{Hodges97c} as a means to obtain compositional semantics for the independence-friendly logic of Hintikka and Sandu \cite{hisa89}. 

%% IF-modal logics: taustaa ja viittauksia lehtiversioon!

With the aim to import dependences and team semantics to modal logic Väänänen \cite{va08} introduced \emph{modal dependence logic} $\MDL$. In the context of modal logic a team is just a set of states in a Kripke model. Modal dependence logic extends standard modal logic with team semantics by modal dependence atoms, $\dep[p_1,\dots,p_n,q]$. The intuitive meaning of the formula $\dep[p_1,\dots,p_n,q]$ is that within a team the truth value of the proposition $q$ is functionally determined by the truth values of the propositions $p_1,\dots,p_{n}$. 

Modal dependence logic is a first step toward combining functional dependences and modal logic. The logic however lacks the ability to express temporal dependences, only propositional dependences can be expressed. This is due to the restriction that only proposition symbols are allowed in the dependence atoms of $\MDL$. To overcome this defect Ebbing et~al.~\cite{EHMMVV13} introduced the
\emph{extended modal dependence logic}, $\EMDL$, which is obtained from $\MDL$ by extending the scope of dependence atoms to arbitrary modal formulas, i.e., dependence atoms in $\EMDL$ are of the form $\dep[\varphi_1,\dots\varphi_n,\psi]$, where $\varphi_1,\dots,\varphi_n,\psi$ are $\ML$ formulas.

In recent years the research around modal dependence logic and other 
modal logics with team semantics has been active, see e.g.
\cite{EHMMVV13,EbbingL12,ebloya,galliani13,lohvo13,vollmer13,Sevenster:2009,fanthesis}.
An important logic, closely related to modal dependence logic, is modal logic with intuitionistic disjunction, $\ML(\idis)$.
It was already observed by Väänänen \cite{va08} that dependence atoms can be defined
by using the intuitionistic disjunction~$\idis$. Using this observation Ebbing et~al.~\cite{EHMMVV13} showed that in terms of expressiveness, $\EMDL$ is contained in $\ML(\idis)$.
However, it was left open, whether the containment is strict, or
whether $\EMDL$ and $\ML(\idis)$ are actually equivalent with respect to 
expressive power.

Team semantics is also meaningful in the context of purely propositional logics.
Propositional dependence logic was extensively studied in the recent 
Ph.D.~thesis of Fan Yang \cite{fanthesis}. As pointed out in \cite{fanthesis},
propositional dependence logic is closely related to the inquisitive logic of 
Groenendijk \cite{Groenendijk07}  (see also \cite{Ciardelli11,sano11}). 
Like in the team semantics of propositional dependence logic, in inquisitive logic 
the meaning of formulas is defined on sets of assignments for proposition symbols. 
Ciardelli \cite{ciardelli09} proved that inquisitive logic is
expressively complete in the sense that every downward closed
property of teams (over a finite set of proposition symbols) is definable by 
a formula of inquisitive logic. Thus, we can say that the set of connectives used in inquisitive
logic is complete in the same spirit as, e.g., $\{\lnot,\land\}$ is a complete set of
connectives for propositional logic. 
Fan Yang \cite{fanthesis} proved that the same expressive completeness result
holds for propositional dependence logic, and consequently, inquisitive logic and
propositional dependence logic
are equivalent with respect to expressive power. 

It is well known that
the expressive power of modal logic can be characterized via bisimulation: 
by the famous result of Gabbay and van Benthem,
a class $\cK$ of pointed Kripke models $(K,w)$ is definable by a formula of modal logic 
if and only if $\cK$ is closed under $k$-bisimulation, for some $k\in\mathbb{N}$. 
In this paper we prove a joint extension to this characterization 
%of the expressive power of modal logic 
and the characterization of the expressive power of
inquisitive logic and propositional dependence logic mentioned above.
We first define a canonical extension of bisimulation suitable for team semantics, 
called team bisimulation.  Then we show that a class $\cK$ of of pairs $(K,T)$,
where $K$ is a Kripke model and $T$ is a team,
is definable by a sentence of $\ML(\varovee)$ if and only if $\cK$ is downward 
closed and closed under team $k$-bisimulation, for some $k\in\mathbb{N}$. 

Furthermore, we show that the expressive
power of $\EMDL$ coincides with that of $\ML(\idis)$, thus answering the open problem
from \cite{EHMMVV13} mentioned above.  In particular, we obtain as a corollary that
the expressive power of $\EMDL$ is also characterized by downward closure and
closure under team $k$-bisimulation. Since team $k$-bisimulation is a natural
adaptation of $k$-bisimulation to the context of team semantics, this result shows
that $\EMDL$ can be regarded as a canonical extension of modal logic for 
expressing dependences between formulas.

In addition, we introduce two semantical invariants for formulas of $\ML(\idis)$ and
$\EMDL$, which we call lower dimension and upper dimension, respectively.
We show that the truth of a formula in a team of a Kripke model can be determined
by checking its truth on subteams of a fixed size $n$. The lower dimension of the 
formula in question is the least $n\in\N$ such that this holds. Thus, lower dimension
gives rise to a natural classification of formulas with respect to their semantical
complexity, and we believe that it can also be used for analyzing the computational 
complexity of the model checking problem of modal formulas.

The upper dimension of a formula 
is defined as the largest number of maximal teams satisfying the formula
in any fixed Kripke model. We prove that the lower dimension of any formula
is less than or equal to its upper dimension. Moreover, we show
that the upper dimension admits well-behaved compositionally defined estimates.
These estimates are very useful in establishing upper bounds for lower dimension
as well, since finding good estimates for the lower dimension directly seems
to be difficult. 

Finally, we use the upper dimension for proving that any translation from
$\EMDL$ into $\ML(\idis)$ increases the size of some formulas exponentially.
To prove this, we show that the upper dimension of a dependence atom 
$\dep[p_1,\dots,p_n,q]$ is $2^{2^n}$, while the upper dimension of
any $\ML(\idis)$-formula $\varphi$ is at most $2^d$, where $d$ is the number
of occurrences of $\idis$ in $\varphi$.

\section{Background}

In this section we first give the syntax and team semantics for the modal logics studied in the paper.
We then formulate the notions of definability and expressive power in team semantics.
Finally we recall the basic results concerning bisimulation and definability in the
context of standard Kripke semantics.

\subsection{Modal logics with team semantics}

The syntax of modal logic $\ML$ could be defined in any standard way. However, when we consider 
the extension of $\ML$ by dependence atoms, it is useful to assume that all formulas are
in \emph{negation normal form}, i.e., negations occur only in front of atomic propositions.
Thus, we define the syntax of $\ML$ as follows:
\begin{definition}\label{def:ML}
Let $\Phi$ be a set of proposition symbols. The set of formulas of $\ML(\Phi)$ is generated by the following grammar
\[
\varphi \ddfn p\mid \neg p \mid (\varphi \wedge \varphi) \mid (\varphi \vee \varphi) \mid \Diamond \varphi \mid \Box \varphi, %\quad\text{where $p\in\Phi$.}
\]
where $p\in\Phi$.
\end{definition}
In this article we consider three extensions of $\ML$: \emph{modal logic with intuitionistic disjunction}
$\ML(\idis)$, \emph{modal dependence logic} $\MDL$, and \emph{extended modal dependence logic}
$\EMDL$.

\begin{definition}
\begin{enumerate}
\item The syntax of modal logic with intuitionistic disjunction $\ML(\varovee)(\Phi)$ is obtained by extending the syntax of $\ML$ by the grammar rule
\[
\varphi \ddfn (\varphi\varovee\varphi).
\]
\item The syntax for modal dependence logic $\MDL(\Phi)$ is obtained by extending the syntax of $\ML$ by dependence atoms
\[
\varphi\ddfn\dep[p_1,\dots, p_n,q], 
%\quad \text{where $p_1,\dots,p_n,q\in \Phi$.}
\]
where $p_1,\dots,p_n,q\in \Phi$.
\item The syntax for extended modal dependence logic $\EMDL(\Phi)$ is obtained by extending the syntax of $\ML$ by dependence atoms
\[
\varphi\ddfn\dep[\psi_1,\dots, \psi_n,\theta], 
%\quad \text{where $\psi_1,\dots,\psi_n,\theta$ are $\ML$-formulas.}
\]
where $\psi_1,\dots,\psi_n,\theta$ are $\ML$-formulas.
\end{enumerate}
\end{definition}

The notion of Kripke model is defined as usual. Thus,
if $\Phi$ is a set of proposition symbols, a \emph{Kripke model $K$ over $\Phi$} is a triple $K = (W, R, V)$, where $W$ is a set of \emph{states} or \emph{(possible) worlds}, $R\subseteq W\times W$ is an \emph{accessibility relation}, and $V$ is a \emph{valuation} $V\colon \Phi \to \powerset{W}$. 

The semantics of $\ML$ is usually defined on pointed Kripke models.
We write $K,w\models\varphi$ if $\varphi\in\ML(\Phi)$
is true in $w\in W$ according to the standard Kripke semantics. However, to give
a meaningful semantics for dependence atoms and intuitionistic disjunction, we
need to consider arbitrary sets of states instead of single states as points of evaluation.

\begin{definition}
Let $K=(W,R,V)$ be a Kripke model.
\begin{enumerate}
\item Any subset $T$ of $W$ is called a \emph{team} of $K$.
\item For any team $T\subseteq W$ we write
$R[T]=\{v\in W \mid \exists w\in T: wRv\}$ and 
$R^{-1}[T]=\{w\in W \mid \exists v\in T: wRv\}$.
\item For teams $T,S\subseteq W$ we write $T[R]S$ if $S\subseteq R[T]$
and $T\subseteq R^{-1}[S]$. 
\end{enumerate}
\end{definition}

Thus, $T[R]S$ holds if and only if for every $v\in S$ there is $w\in T$ such that $wRv$,
and for every $w\in T$ there is $v\in S$ such that $wRv$.
We are now ready to define \emph{team semantics} for the modal logics studied
in this paper.

\begin{definition}\label{semantics}
The semantics for $\ML$, $\ML(\varovee)$, $\MDL$, and $\EMDL$ is defined as follows.
\begin{align*}
K,T\models p  \quad\Leftrightarrow\quad& T\subseteq V(p).\\
%w\in V(p) \text{ for every $w\in T$.}\\
%
K,T\models \neg p \quad\Leftrightarrow\quad& T\cap V(p)=\emptyset. \\
%w\not\in V(p) \text{ for every $w\in T$.}\\
%
K,T\models \varphi\land\psi \quad\Leftrightarrow\quad& K,T\models\varphi \text{ and } K,T\models\psi.\\
K,T\models \varphi\lor\psi \quad\Leftrightarrow\quad& K,T_1\models\varphi \text{ and } 
K,T_2\models\psi\\
&\text{for some $T_1,T_2$ such that $T_1\cup T_2= T$}.\\
K,T\models \Diamond\varphi \quad\Leftrightarrow\quad& K,T'\models\varphi \text{ for some $T'$ such that $T[R]T'$}.\\
K,T\models \Box\varphi \quad\Leftrightarrow\quad& K,T'\models\varphi, \text{ where $T'=R[T]$.}\\
\intertext{For $\ML(\varovee)$ we have the following additional clause:}
K,T\models \varphi\varovee\psi \quad\Leftrightarrow\quad& K,T\models\varphi \text{ or } K,T\models\psi.\\
\intertext{For $\MDL$ and $\EMDL$ we have the following additional clause:}
K,T\models \dep[\psi_1,\dots,\psi_n,\theta] \quad\Leftrightarrow\quad& \forall w,v\in T: \bigwedge_{i=1}^{n}(K,\{w\}\models\psi_i \Leftrightarrow K,\{v\}\models\psi_i)\\
& \text{implies }(K,\{w\}\models\theta\Leftrightarrow K,\{v\}\models\theta).
\end{align*}
\end{definition}

Note in particular that $\dep[\theta]$ is a formula saying that the truth value of $\theta$
is constant in the given team: $K,T\models\dep[\theta]$ if and only if either $K,\{w\}\models\theta$
for all $w\in T$, or $K,\{w\}\not\models\theta$ for all $w\in T$.

The team semantics for basic modal logic $\ML$ can be reduced to the usual
Kripke semantics in the sense that a team $T$ satisfies a formula $\varphi$ if and only if every 
state in $T$ satisfies $\varphi$:  

\begin{proposition}[{\cite[Theorem 1]{Sevenster:2009}}]\label{ML extension}
Let $K$ be a Kripke model, $T$ a team of $K$, and 
$\varphi$ an $\ML(\Phi)$-formula. Then 
\[
	K,T\models \varphi \quad\iff\quad K,w\models\varphi \text{ for every } w\in T.
\]
In particular, $K,\{w\}\models \varphi \;\iff\; K,w\models\varphi$.
\end{proposition}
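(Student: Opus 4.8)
The plan is to argue by induction on the structure of the $\ML(\Phi)$-formula $\varphi$; the ``in particular'' clause then falls out immediately by instantiating $T \dfn \{w\}$. For the atomic cases $\varphi = p$ and $\varphi = \neg p$ I would simply unwind the definitions: $T\subseteq V(p)$ holds exactly when $w\in V(p)$ for every $w\in T$, i.e.\ $K,w\models p$ for every $w\in T$, and dually $T\cap V(p)=\emptyset$ holds exactly when $w\notin V(p)$ for every $w\in T$. In both cases the empty team vacuously satisfies the atom, which matches the vacuously-true right-hand side; keeping track of this is what makes the induction run smoothly through the modal cases.

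For the conjunction $\varphi = \psi\land\chi$ and the box $\varphi = \Box\psi$ the step is pure bookkeeping. Conjunction commutes with the universal quantifier ``for every $w\in T$'', so the claim reduces directly to the two inductive hypotheses. For $\Box\psi$ one uses that $K,T\models\Box\psi$ iff $K,R[T]\models\psi$, which by the inductive hypothesis for $\psi$ says $K,v\models\psi$ for every $v\in R[T]$; since $R[T]$ is precisely the set of all $R$-successors of states in $T$, this is equivalent to $K,w\models\Box\psi$ for every $w\in T$.

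The only cases with genuine content are the splitting disjunction $\varphi = \psi\lor\chi$ and the diamond $\varphi = \Diamond\psi$, and in both the work is to manufacture a witnessing (sub)team out of pointwise information. For $\psi\lor\chi$ the forward direction is easy: given a splitting $T = T_1\cup T_2$ with $K,T_1\models\psi$ and $K,T_2\models\chi$, the inductive hypotheses force each $w\in T$ into $T_1$ or $T_2$ and hence to satisfy $\psi$ or $\chi$; conversely, if $K,w\models\psi\lor\chi$ for all $w\in T$ I would take the canonical splitting $T_1\dfn\{w\in T\mid K,w\models\psi\}$ and $T_2\dfn\{w\in T\mid K,w\models\chi\}$, note $T_1\cup T_2=T$, and invoke the inductive hypotheses to get $K,T_1\models\psi$ and $K,T_2\models\chi$. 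For $\Diamond\psi$: if $K,T\models\Diamond\psi$ via some $T'$ with $T[R]T'$, then $T\subseteq R^{-1}[T']$ gives every $w\in T$ a successor in $T'$, which satisfies $\psi$ by the inductive hypothesis, so $K,w\models\Diamond\psi$; conversely, choosing for each $w\in T$ a successor $v_w$ with $K,v_w\models\psi$ and putting $T'\dfn\{v_w\mid w\in T\}$ yields $T[R]T'$ and, by the inductive hypothesis, $K,T'\models\psi$, hence $K,T\models\Diamond\psi$. The ``hard part'', such as it is, is only mild: in the disjunction case one must produce an \emph{explicit} splitting rather than merely assert its existence, and in the diamond case one must make a choice of successor witnesses; both steps are routine and both rely on the fact---being proved by the very induction in progress---that membership of a single state in a satisfying team can be decided by the single-state semantics.
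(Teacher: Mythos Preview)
Your inductive argument is correct and is the standard way to establish this flatness result. Note, however, that the paper does not give its own proof of this proposition: it is stated as a citation of \cite[Theorem 1]{Sevenster:2009} and used as a black box. So there is nothing in the paper to compare your proof against; your write-up simply supplies the (routine) details the paper chose to omit.
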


\subsection{Definability and expressive power}

A \emph{$\Phi$-model with a team} is a pair $(K,T)$, where $K$ is a Kripke model over
$\Phi$ and $T$ is a team of $K$. We denote by $\KT(\Phi)$ the class of 
$\Phi$-models with teams.
%For any set $\Phi$ of proposition symbols, we denote by $\KT(\Phi)$ the class of all pairs $(K,T)$,
%where $K$ is a Kripke model over $\Phi$ and $T$ is a team of $K$.
If $\calL$ is one of the logics $\ML,\ML(\idis),\MDL,\EMDL$, then
each formula $\varphi\in\calL(\Phi)$ \emph{defines} a class of $\Phi$-models with teams:
\[
	\Vert\varphi\Vert:=\{(K,T)\in\KT(\Phi)\mid K,T\models\varphi\}.
\]
A class $\cK\subseteq\KT(\Phi)$ is \emph{definable} in $\calL$, if there is a formula
$\varphi\in\calL(\Phi)$ such that $\cK=\Vert\varphi\Vert$.

If $\calL$ s a logic whose semantics is defined on Kripke models with teams, then
the \emph{expressive power} of $\calL$ is just the collection of classes $\Vert\varphi\Vert$,
$\varphi\in\calL$, that are definable in $\calL$. Accordingly, the expressive power 
of two such logics $\calL$ and $\calL'$ can be compared as follows:
\begin{itemize}
\item $\calL'$ is \emph{at least as expressive as} $\calL$, $\calL\le\calL'$, if for 
every $\varphi\in\calL(\Phi)$ there is $\psi\in\calL'(\Phi)$ such that 
$\Vert\varphi\Vert=\Vert\psi\Vert$.
\item $\calL$ is \emph{less expressive than} $\calL'$, $\calL<\calL'$, if
$\calL\le\calL'$, but $\calL'\not\le\calL$.
\item $\calL$ and $\calL'$ are \emph{equally expressive}, $\calL\equiv\calL'$, if
$\calL\le\calL'$ and $\calL'\le\calL$.
\end{itemize}

%Since $\MDL$ extends $\ML$ without changing the semantics, and similarly
%$\EMDL$ extends $\MDL$, it is clear that
Clearly $\ML\le\MDL\le\EMDL$.
Väänänen \cite{va08} gave a translation from $\MDL$ to $\ML(\idis)$,
and extending this translation to $\EMDL$, it was proved in \cite{EHMMVV13}
that $\EMDL\le\ML(\idis)$. Furthermore, it is easy to see that
dependence atoms are not definable in $\ML$, and in \cite{EHMMVV13}
it was proved that the non-propositional dependence atom $\dep[\Diamond p]$
is not definable in $\MDL$. Summing up, the following relationships between
the logics $\ML$, $\MDL$, $\EMDL$ and $\ML(\idis)$ are known:

\begin{proposition}[\cite{EHMMVV13}]\label{comparison}
$\ML<\MDL<\EMDL\le\ML(\idis)$. 
\end{proposition}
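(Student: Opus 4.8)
The statement decomposes into the inclusions $\ML\le\MDL\le\EMDL$, the separation $\ML<\MDL$, the separation $\MDL<\EMDL$, and the inclusion $\EMDL\le\ML(\idis)$. The first chain is purely syntactic: $\MDL(\Phi)$ and $\EMDL(\Phi)$ arise by adding grammar rules to $\ML(\Phi)$, and a propositional dependence atom $\dep[p_1,\dots,p_n,q]$ is literally the extended atom $\dep[\psi_1,\dots,\psi_n,\theta]$ with each $\psi_i$ and $\theta$ a proposition symbol; hence $\ML(\Phi)\subseteq\MDL(\Phi)\subseteq\EMDL(\Phi)$. For $\ML<\MDL$, take $\varphi=\dep[p]$: every singleton team satisfies it, so if $\varphi\equiv\psi$ for some $\psi\in\ML(\Phi)$ then in the two-state model $K$ with $V(p)=\{w_1\}$ we get $K,w_0\models\psi$ and $K,w_1\models\psi$, hence $K,\{w_0,w_1\}\models\psi$ by flatness (Proposition~\ref{ML extension}); but $K,\{w_0,w_1\}\not\models\dep[p]$ since $p$ is not constant there — a contradiction.

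For $\EMDL\le\ML(\idis)$ I would give a meaning-preserving translation $\varphi\mapsto\varphi^{*}$ that is the identity on $\ML$-atoms, commutes with $\wedge,\vee,\Diamond,\Box$, and sends a dependence atom to
\[
\dep[\psi_1,\dots,\psi_n,\theta]^{*}\;:=\;\idis_{f\colon\{0,1\}^{n}\to\{0,1\}}\ \bigvee_{\bar a\in\{0,1\}^{n}}\bigl(\psi_1^{a_1}\wedge\cdots\wedge\psi_n^{a_n}\wedge\theta^{f(\bar a)}\bigr),
\]
where $\chi^{1}:=\chi$ and $\chi^{0}$ is the negation normal form of $\neg\chi$ (again an $\ML$-formula). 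Correctness is proved by induction on $\varphi$; the atomic case uses flatness of $\ML$: the split $\bigvee_{\bar a}$ sorts a team into the blocks matching the types of its states with respect to $\psi_1,\dots,\psi_n$ (a state can lie only in its own block), and $K,T\models\dep[\psi_1,\dots,\psi_n,\theta]$ holds exactly when some $f$ is compatible with the $\theta$-values of all states of every realized type — which is what $\idis_f$ asserts. Together with the previous paragraph this gives $\ML<\MDL\le\EMDL\le\ML(\idis)$; only $\MDL<\EMDL$ remains.

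For $\MDL<\EMDL$ the plan is to show the non-propositional atom $\dep[\Diamond p]$ is not $\MDL$-definable — note $\dep[\Diamond p]\equiv\Diamond p\idis\Box\neg p$, so the point is that this one intuitionistic disjunction escapes $\MDL$. First I would bring $\MDL$-formulas into normal form: translating the (propositional) dependence atoms of $\varphi\in\MDL(\Phi)$ as above and pushing every $\idis$ to the front — using that $\idis$ distributes over $\wedge$, over $\vee$ (i.e.\ $(\alpha\idis\beta)\vee\gamma\equiv(\alpha\vee\gamma)\idis(\beta\vee\gamma)$), over $\Diamond$ and over $\Box$ — yields $\varphi\equiv\idis_{i\in I}\psi_i$ with each $\psi_i\in\ML(\Phi)$ of modal depth at most $\md(\varphi)$, each obtained from $\varphi$ by substituting, for every dependence-atom occurrence, one of its \emph{flat completions} $\bigvee_{\bar a}(p_1^{a_1}\wedge\cdots\wedge q^{f(\bar a)})$ — a formula over that atom's proposition symbols only. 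By flatness, $\Vert\varphi\Vert=\bigcup_{i}\Vert\psi_i\Vert$. Now suppose $\Vert\varphi\Vert=\Vert\dep[\Diamond p]\Vert$ over a finite $\Phi\ni p$ and put $m=\md(\varphi)$ (the case $m=0$ being immediate). Evaluating $\varphi$ on two-state teams forces each $\psi_i$ to decide $\Diamond p$ on pointed models ($\psi_i\models\Diamond p$ or $\psi_i\models\Box\neg p$); evaluating it on the team consisting of one representative of every $m$-bisimulation type that satisfies $\Diamond p$ forces some $\psi_{i_0}$ to hold at all of them, whence, by $m$-bisimulation invariance, $\psi_{i_0}\equiv\Diamond p$, and symmetrically some $\psi_{i_1}\equiv\Box\neg p$.

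The main obstacle is this last step: to obtain a contradiction one must show that no formula obtained from a fixed $\MDL$-skeleton by substituting flat completions of propositional dependence atoms can be equivalent to $\Diamond p$ under one choice of completions and to $\Box\neg p$ under another. Intuitively, each intuitionistic disjunction that a propositional dependence atom contributes sends its two branches through exactly the same modal context, and its flat completions mention only proposition symbols; so the asymmetric pair $(\Diamond p,\Box\neg p)$ — one side existential over a chosen successor team, the other universal over all successors — can never be realized. Turning this intuition into a proof, e.g.\ via an invariant of $\MDL$-formulas preserved under a coarsening of team bisimulation that nevertheless separates $\Vert\Diamond p\Vert$ from $\Vert\Box\neg p\Vert$, or via an Ehrenfeucht--Fra\"iss\'e-style game for $\MDL$ together with an explicit pair of models-with-teams that $\dep[\Diamond p]$ distinguishes but no $\MDL$-formula of bounded modal depth does, is the technical core of the separation, carried out in \cite{EHMMVV13}.
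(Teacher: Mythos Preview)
The paper does not give a proof of this proposition: it is stated as a cited result from \cite{EHMMVV13}, and the paragraph preceding it only summarizes the ingredients (the syntactic inclusions are ``clear''; V\"a\"an\"anen's translation, extended in \cite{EHMMVV13}, yields $\EMDL\le\ML(\idis)$; dependence atoms are not definable in $\ML$; and $\dep[\Diamond p]$ is shown in \cite{EHMMVV13} not to be definable in $\MDL$). Your arguments for $\ML\le\MDL\le\EMDL$, for $\ML<\MDL$ via flatness of $\dep[p]$, and for $\EMDL\le\ML(\idis)$ via the standard translation of dependence atoms are all correct and spell out precisely the ideas the paper only alludes to; so on these points you match the paper's approach and go further in detail.

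For $\MDL<\EMDL$ you, like the paper, single out $\dep[\Diamond p]$ as the separating formula and ultimately defer the hard part to \cite{EHMMVV13}. Your normal-form reduction (translate dependence atoms, push $\idis$ to the front using the distributivities you list, observe each resulting $\ML$-disjunct must decide $\Diamond p$, and hence some disjunct is equivalent to $\Diamond p$ and another to $\Box\lnot p$) is correct as far as it goes. But the step you flag as ``the main obstacle'' is genuinely unproved in your proposal: the heuristic that flat completions of propositional dependence atoms go ``through the same modal context'' does not by itself rule out that one choice of completions yields a formula equivalent to $\Diamond p$ and another yields one equivalent to $\Box\lnot p$; turning this into a proof requires the concrete model-theoretic argument of \cite{EHMMVV13}. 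Since the paper itself supplies no more than the citation at this point, your proposal is at the same level of completeness as the paper's own account, just more explicit about where the real work lies.
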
 

Moreover, it was proved in \cite{EHMMVV13} that $\EMDL\equiv\ML(\idis_\ML)$,
where $\ML(\idis_\ML)$ is the fragment of $\ML(\idis)$ that does not allow nesting of
the intuitionistic disjunction $\idis$. However, it was left as an open problem in
\cite{EHMMVV13} whether the expressive power of $\EMDL$ is strictly weaker 
than that of $\ML(\idis)$. 

For any formula $\varphi\in\calL(\Phi)$, the class $\Vert\varphi\Vert$ can be seen as its 
\emph{global meaning}. But it is also useful to consider the meaning of formulas 
\emph{locally}, i.e., with respect to a fixed Kripke model. For any Kripke model 
$K=(W,R,V)$ over $\Phi$, each formula $\varphi\in\calL(\Phi)$ \emph{defines} a set of
teams of $K$: 
\[
	\Vert\varphi\Vert^K:=\{T\subseteq W\mid K,T\models\varphi\}.
\]

Note that it follows from Proposition \ref{ML extension} that the set $\Vert\varphi\Vert^K$ is
\emph{downward closed} for all $\varphi\in\ML$:
\begin{equation*}
	(*)\qquad\text{if } T\in\Vert\varphi\Vert^K\text{ and }S\subseteq T\text{, then }S\in\Vert\varphi\Vert^K.
\end{equation*}
Although Proposition \ref{ML extension} fails for the extensions $\ML(\idis)$, $\MDL$ and $\EMDL$
of $\ML$, downward closure still holds for all of these logics. 
We say that a logic $\mathcal{L}$ is \emph{downward closed} if $(*)$ holds for every
formula $\varphi\in\mathcal{L}$.

\begin{proposition}[\cite{va08},\cite{ebloya}]\label{down-cl}
The logics $\MDL$, $\EMDL$ and $\ML(\idis)$ are downward closed.
\end{proposition}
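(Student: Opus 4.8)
The statement to prove is Proposition~\ref{down-cl}: the logics $\MDL$, $\EMDL$ and $\ML(\idis)$ are downward closed, i.e.\ property $(*)$ holds for every formula.

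\medskip

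The plan is to prove a single statement covering all three logics at once, namely that $(*)$ holds for every formula $\varphi$ of $\ML(\idis)$, $\MDL$, or $\EMDL$, by induction on the structure of $\varphi$. The base cases are the literals and the dependence atoms. For $p$ and $\neg p$, downward closure is immediate from the semantic clauses: $T\subseteq V(p)$ and $S\subseteq T$ give $S\subseteq V(p)$, and similarly $T\cap V(p)=\emptyset$ gives $S\cap V(p)=\emptyset$. For a dependence atom $\dep[\psi_1,\dots,\psi_n,\theta]$, the semantic condition is a universally quantified statement ranging over pairs $w,v\in T$; restricting to a subteam $S\subseteq T$ only shrinks the range of this quantifier, so the condition is inherited by $S$. (Note this argument works uniformly whether the $\psi_i,\theta$ are proposition symbols, as in $\MDL$, or arbitrary $\ML$-formulas, as in $\EMDL$, since their truth at singletons does not depend on the team.)

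\medskip

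For the inductive step I would treat the connectives one by one, assuming $(*)$ holds for the immediate subformulas and that $S\subseteq T\in\Vert\varphi\Vert^K$. The cases $\varphi\wedge\psi$ and $\varphi\varovee\psi$ are trivial: both decompose into conjunctions/disjunctions of statements about the \emph{same} team $T$, so applying the induction hypothesis componentwise gives the result for $S$. For $\varphi\vee\psi$, from a splitting $T=T_1\cup T_2$ with $K,T_1\models\varphi$ and $K,T_2\models\psi$, I would take $S_1=S\cap T_1$ and $S_2=S\cap T_2$; then $S_1\cup S_2=S$, $S_i\subseteq T_i$, and the induction hypothesis yields $K,S_1\models\varphi$ and $K,S_2\models\psi$, hence $K,S\models\varphi\vee\psi$. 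The modal cases require a bit more care: for $\Box\varphi$, if $K,R[T]\models\varphi$ then since $S\subseteq T$ implies $R[S]\subseteq R[T]$, the induction hypothesis gives $K,R[S]\models\varphi$, i.e.\ $K,S\models\Box\varphi$. For $\Diamond\varphi$, given $T'$ with $T[R]T'$ and $K,T'\models\varphi$, I would set $S'=R[S]\cap T'$; one checks $S[R]S'$ (every $v\in S'$ has a predecessor in $S$ by construction, and every $w\in S\subseteq T$ has a successor in $T'$, which then lies in $R[S]\cap T'=S'$), and $S'\subseteq T'$ gives $K,S'\models\varphi$ by induction, so $K,S\models\Diamond\varphi$.

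\medskip

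I do not expect any serious obstacle here; the only points needing attention are choosing the right witness subteams in the $\vee$ and $\Diamond$ cases and verifying the $[R]$-relation holds for the chosen $S'$. The argument is otherwise a routine structural induction, and the fact that the three logics share the same underlying $\ML$ grammar plus one extra clause each means the whole proof fits in a single induction with three extra base/inductive cases. I would present it compactly, spelling out the $\Diamond$ case in full and leaving the others as brief remarks.
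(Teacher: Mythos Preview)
Your induction is correct; in particular the witness $S'=R[S]\cap T'$ in the $\Diamond$ case works and your verification of $S[R]S'$ is sound.

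The paper takes a different route: it does not reprove downward closure at all, but simply cites \cite{va08} for $\MDL$ and \cite{ebloya} for $\ML(\idis)$, and then obtains the $\EMDL$ case indirectly from the translation $\EMDL\le\ML(\idis)$ of Proposition~\ref{comparison}. Your approach has the advantage of being self-contained and uniform---one induction handles all three logics without invoking the external translation result---whereas the paper's approach is terser but imports the $\EMDL$ case as a corollary of a strictly stronger fact. Either is fine; your version would be the natural choice if one wanted the paper to stand independently of \cite{va08,ebloya,EHMMVV13} on this point.
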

\begin{proof}
For $\MDL$ and $\ML(\idis)$, downward closure was proved in \cite{va08} and 
\cite{ebloya}. For $\EMDL$, the claim follows from the fact that
$\EMDL\le\ML(\idis)$.
\end{proof}

\subsection{Bisimulation and definability in Kripke semantics}

It is well known that the expressive power of basic modal logic $\ML$ with respect to 
Kripke semantics can be completely characterized in terms of $k$-bisimulation.
Our aim is to give an analogous characterization for the expressive power
of $\ML(\idis)$ and $\EMDL$. For this purpose we need some basic concepts and
results related to $k$-bisimulation.

The \emph{modal depth} $\md(\varphi)$ of a formula of $\ML(\Phi)$
is defined in the obvious manner, i.e.,
$\md(p)=\md(\lnot p)=0$ for $p\in\Phi$, 
$\md(\varphi\land\psi)=\md(\varphi\lor\psi)=\max\{\md(\varphi),\md(\psi)\}$, and
$\md(\Diamond\varphi)=\md(\Box\varphi)=\md(\varphi)+1$.

A \emph{pointed $\Phi$-model} is a pair $(K,w)$ such that $K$ is a Kripke model over $\Phi$,
and $w$ is a state in $K$.
Let  $k$ be a natural number, and let $(K,w)$ and $(K',w')$ be pointed $\Phi$-models. 
We say that $(K,w)$ and $(K',w')$ are \emph{$k$-equivalent}, in symbols
$K,w\equiv_k K',w'$, if for every $\varphi\in\ML(\Phi)$ with $\md(\varphi)\le k$
\[
	K,w\models\varphi\quad\iff\quad K',w'\models\varphi.
\]

\begin{definition}
Let  $k\in\N$, and let $(K,w)$ and $(K',w')$ be pointed $\Phi$-models.
We write $K,w\kbisim K',w'$ if $(K,w)$ and $(K',w')$ are $k$-bisimilar.
The $k$-bisimilarity relation $\kbisim$ can be defined recursively as follows:
\begin{itemize}
\item $K,w\obisim K',w'$ if and only if the equivalence
\(
	K,w\models p\iff K',w'\models p
\) 
holds for all $p\in\Phi$.
\item $K,w\bisimK K',w'$ if and only if $K,w\obisim K',w'$, and
\begin{enumerate}
\item[-] for every $v\in R[w]$ there is $v'\in R'[w']$ such that $K,v\kbisim K',v'$, and
\item[-] for every $v'\in R'[w']$ there is $v\in R[w]$ such that $K,v\kbisim K',v'$.
\end{enumerate}
(Here $R[w]$ is a shorthand notation for $R[\{w\}]$. Thus, $v\in R[w]\;\iff\; wRv$.)
\end{itemize}
\end{definition}

A class $\cK$ of pointed $\Phi$-models is \emph{closed under $k$-bisimulation} if 
it satisfies the following condition:
\begin{itemize}
\item $(K,w)\in\cK$ and $K,w\kbisim K',w'$ implies that $(K',w')\in\cK$.
\end{itemize}

We will also make use of the fact that for every pointed $\Phi$-model $(K,w)$
and every $k\in\N$ there is a formula that characterizes $(K,w)$ completely
up to $k$-equivalence. These \emph{Hintikka formulas} (or \emph{characteristic formulas})
are defined as follows (see e.g. \cite{Goranko2007}):

\begin{definition}
Assume that $\Phi$ is a finite set of proposition symbols.
Let $k\in\N$ and let $(K,w)$ be a pointed $\Phi$-model. The \emph{$k$-th Hintikka formula}
$\chi^k_{K,w}$ of $(K,w)$ is defined recursively as follows:
\begin{itemize}
\item $\chi^0_{K,w}:=\bigwedge \{p\mid p\in \Phi, w\in V(p)\}\land\bigwedge\{\lnot p\mid p\in\Phi, w\not\in V(p)\}$.
\item $\chi^{k+1}_{K,w}:=\chi^k_{K,w}\land \bigwedge_{v\in R[w]}\Diamond\chi^k_{K,v}
\land\Box \bigvee_{v\in R[w]}\chi^k_{K,v}$.
\end{itemize}
\end{definition}

It is easy to see that $\md(\chi^k_{K,w})=k$, and $K,w\models\chi^k_{K,w}$
for every pointed $\Phi$-model $(K,w)$. Moreover, the Hintikka formula 
$\chi^k_{K,w}$ captures the essence of $k$-bisimulation:

\begin{proposition}\label{hintikka}
Let $\Phi$ be a finite set of proposition symbols, $k\in\N$, and $(K,w)$ and $(K',w')$ pointed $\Phi$-models.
Then
\[
	K,w\equiv_k K',w'\quad\iff\quad K,w\kbisim K',w'\quad\iff\quad K',w'\models\chi^k_{K,w}.
\]
\end{proposition}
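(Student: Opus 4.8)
The plan is to prove the chain of implications
\[
K',w'\models\chi^k_{K,w}\ \Rightarrow\ K,w\kbisim K',w'\ \Rightarrow\ K,w\equiv_k K',w'\ \Rightarrow\ K',w'\models\chi^k_{K,w},
\]
which closes the cycle and hence gives all three equivalences. The last implication is immediate: since $\md(\chi^k_{K,w})=k$ and $K,w\models\chi^k_{K,w}$ (both noted in the text just above the statement), $k$-equivalence of $(K,w)$ and $(K',w')$ forces $K',w'\models\chi^k_{K,w}$. So the real content is the first two implications, and I would prove each by induction on $k$.

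For the first implication, $K',w'\models\chi^k_{K,w}\Rightarrow K,w\kbisim K',w'$, I induct on $k$. The base case $k=0$: $\chi^0_{K,w}$ is the conjunction of exactly those literals over $\Phi$ true at $w$, so if $K',w'\models\chi^0_{K,w}$ then $w'$ satisfies precisely the same proposition symbols as $w$, i.e.\ $K,w\obisim K',w'$. For the step, suppose $K',w'\models\chi^{k+1}_{K,w}=\chi^k_{K,w}\land\bigwedge_{v\in R[w]}\Diamond\chi^k_{K,v}\land\Box\bigvee_{v\in R[w]}\chi^k_{K,v}$. The first conjunct plus the induction hypothesis (applied with $k$, but actually we just need it at level $0$ here; more precisely the first conjunct gives $\obisim$ directly) yields $K,w\obisim K',w'$. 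For the forth condition of $\bisimK$: given $v\in R[w]$, the conjunct $\Diamond\chi^k_{K,v}$ gives some $v'\in R'[w']$ with $K',v'\models\chi^k_{K,v}$, and by the induction hypothesis $K,v\kbisim K',v'$. For the back condition: given $v'\in R'[w']$, the conjunct $\Box\bigvee_{v\in R[w]}\chi^k_{K,v}$ gives some $v\in R[w]$ with $K',v'\models\chi^k_{K,v}$, and again the induction hypothesis yields $K,v\kbisim K',v'$.

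For the second implication, $K,w\kbisim K',w'\Rightarrow K,w\equiv_k K',w'$, I again induct on $k$, this time on the structure of an arbitrary $\varphi\in\ML(\Phi)$ with $\md(\varphi)\le k$. The base $k=0$ handles literals directly from $\obisim$. In the step, the cases $\varphi=p,\lnot p$ use $\obisim\subseteq\bisimK$; the Boolean cases $\varphi=\psi\land\theta,\ \psi\lor\theta$ (note these stay within modal depth $\le k+1$) follow componentwise from the induction hypothesis at the same level; and for $\varphi=\Diamond\psi$ or $\varphi=\Box\psi$ with $\md(\psi)\le k$, one uses the forth/back conditions of $\bisimK$ together with the induction hypothesis applied to $\psi$ at level $k$. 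This is the standard bisimulation-invariance argument, so I would present it compactly.

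The genuinely delicate point is the bookkeeping on modal depth versus the bisimulation index, and the asymmetry of the Hintikka construction: $\chi^{k+1}_{K,w}$ is built from the $k$-Hintikka formulas of the \emph{$R$-successors of $w$}, so in the first implication the "back" direction relies essentially on the universal conjunct $\Box\bigvee_{v\in R[w]}\chi^k_{K,v}$ forcing every successor of $w'$ to realize one of finitely many successor-types of $w$ — here finiteness of $\Phi$ is what guarantees this disjunction is a well-formed finite formula, and I would flag that. Everything else is routine induction, and I would not write out the Boolean cases in full.
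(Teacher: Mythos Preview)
Your proof is correct and follows the standard textbook argument. However, the paper does not actually prove this proposition: it is stated as a known classical result (the surrounding text refers to \cite{Goranko2007}), so there is no ``paper's own proof'' to compare against. Your cycle of implications and the two inductions are exactly the usual route; the only minor imprecision is the parenthetical remark that the first conjunct of $\chi^{k+1}_{K,w}$ ``gives $\obisim$ directly'' --- strictly speaking that conjunct is $\chi^k_{K,w}$, and you get $\obisim$ via the induction hypothesis (or by observing that $\chi^0_{K,w}$ is always a subconjunct by unfolding). This does not affect correctness.
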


The characterization for the expressive power of $\ML$ with respect to Kripke-semantics 
can now be stated as follows:

\begin{proposition}[{van Benthem}, {Gabbay}] %Viitteet?!
Assume that $\Phi$ is a finite set of proposition symbols. A class $\cK$
of pointed $\Phi$-models 
is definable in $\ML$ if and only if there is $k\in\N$ such that $\cK$ is closed under $k$-bisimulation.
\end{proposition}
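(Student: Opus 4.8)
The plan is to prove both directions separately, the easy direction being the implication from definability to closure under $k$-bisimulation, and the substantive direction being the converse.

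First I would establish the forward direction. Suppose $\cK=\Vert\varphi\Vert$ for some $\varphi\in\ML(\Phi)$, and set $k:=\md(\varphi)$. One then shows that $k$-bisimilar pointed models satisfy the same $\ML$-formulas of modal depth at most $k$; this is precisely the content of Proposition~\ref{hintikka} (the equivalence $K,w\equiv_k K',w' \iff K,w\kbisim K',w'$). Hence if $(K,w)\in\cK$ and $K,w\kbisim K',w'$, then $K',w'\models\varphi$ as well, so $(K',w')\in\cK$, i.e. $\cK$ is closed under $k$-bisimulation.

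For the converse, assume $\cK$ is closed under $k$-bisimulation for some fixed $k\in\N$. The key point is that, since $\Phi$ is finite, there are only finitely many $k$-bisimulation types of pointed $\Phi$-models: this follows because each type is determined by the $k$-th Hintikka formula $\chi^k_{K,w}$, and up to logical equivalence there are only finitely many such formulas (they are built from a finite stock of propositional atoms with modal depth bounded by $k$). Concretely, one argues by induction on $k$ that the set $\{\chi^k_{K,w} \mid (K,w) \text{ a pointed } \Phi\text{-model}\}$ is finite up to equivalence — at stage $k+1$ each formula is a conjunction of $\chi^k_{K,w}$ with a $\Diamond$-requirement and a $\Box$-requirement over the finitely many stage-$k$ types. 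Let $\tset$ be a finite set of representatives, one pointed model per $k$-bisimulation type, and let $S:=\{(K,w)\in\tset \mid (K,w)\in\cK\}$. Then I would define
\[
	\varphi_{\cK} := \bigvee_{(K,w)\in S} \chi^k_{K,w},
\]
with the convention that an empty disjunction is $\bot$ (expressible, e.g., as $p\land\lnot p$). The claim is that $\cK = \Vert\varphi_{\cK}\Vert$. For the inclusion $\subseteq$: if $(K',w')\in\cK$, it is $k$-bisimilar to some representative $(K,w)\in\tset$, which by closure lies in $S$; then by Proposition~\ref{hintikka}, $K',w'\models\chi^k_{K,w}$, so $K',w'\models\varphi_{\cK}$. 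For $\supseteq$: if $K',w'\models\varphi_{\cK}$, then $K',w'\models\chi^k_{K,w}$ for some $(K,w)\in S\subseteq\cK$, so by Proposition~\ref{hintikka} $K',w'\kbisim K,w$, and closure of $\cK$ under $k$-bisimulation gives $(K',w')\in\cK$.

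The main obstacle — and the only place requiring genuine care — is the finiteness of the set of $k$-bisimulation types (equivalently, of Hintikka formulas up to equivalence) over a finite $\Phi$. This is what makes the disjunction $\varphi_{\cK}$ a well-defined finite formula and hence a legitimate $\ML$-formula. Everything else is a direct application of Proposition~\ref{hintikka}. It is worth noting that the finiteness of $\Phi$ is essential here: with infinitely many proposition symbols even the stage-$0$ Hintikka "formulas" would be infinite conjunctions and the argument collapses, which is why the statement is phrased with that hypothesis.
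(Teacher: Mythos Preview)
Your argument is correct and is the standard proof of this classical result. Note, however, that the paper does not give its own proof of this proposition: it is stated as a known theorem attributed to van Benthem and Gabbay, with no proof supplied. Your approach is nonetheless exactly the one the paper's surrounding machinery is set up to support---you use the Hintikka formulas $\chi^k_{K,w}$ and Proposition~\ref{hintikka} in precisely the way the paper later uses them for the team-semantics analogue (Theorem~\ref{mlidis-def}), so your proof fits seamlessly into the paper's framework.
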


\section{$\ML(\idis)$ and team bisimulation}

In this section we prove a characterization for the expressive power of $\ML(\idis)$.
This characterization is based on a natural adaptation of the notion of $k$-bisimulation
to logics with team semantics.  

\subsection{Bisimulation in team semantics}

We start by defining $k$-bisimulation in the context of team semantics; the definition
is directly based on the $k$-bisimulation relation $\kbisim$ for Kripke semantics.

\begin{definition}
Let $(K,T),(K',T')\in\KT(\Phi)$ and $k\in\N$. 
%We say that $K,T$ and $K',T'$ are \emph{team bisimilar} and write $K,T\Bisim K',T'$ if
%\begin{enumerate}
%\item for every $w\in T$ there exists some $w'\in T'$ such that $K,w\bisim K,w'$, and
%\item for every $w'\in T'$ there exists some $w\in T$ such that $K,w\bisim K,w'$.
%\end{enumerate}
We say that $K,T$ and $K',T'$ are \emph{team $k$-bisimilar} and write $K,T\kBisim K',T'$ if
\begin{enumerate}
\item for every $w\in T$ there exists some $w'\in T'$ such that $K,w\kbisim K,w'$, and
\item for every $w'\in T'$ there exists some $w\in T$ such that $K,w\kbisim K,w'$.
\end{enumerate}
\end{definition}

It is well known that $K,w\kbisim K',w'$ implies $K,w\nbisim K',w'$ for all $n\le k$.
Using this it is easy to prove that the same holds also for team $k$-bisimilarity:

\begin{lemma}
Let $(K,T),(K',T')\in\KT(\Phi)$ and $k\in\N$. If $K,T\kBisim K',T'$, then $K,T\nBisim K',T'$ for all $n\le k$.
\end{lemma}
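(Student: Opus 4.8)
The plan is to observe that the statement is an immediate consequence of the well-known monotonicity of Kripke $k$-bisimulation, namely that $K,w\kbisim K',w'$ implies $K,w\nbisim K',w'$ whenever $n\le k$, combined with the fact that team $k$-bisimilarity is defined purely in terms of the pointed relation $\kbisim$ via a back-and-forth condition on the members of the two teams. So no new idea is needed beyond unfolding the definition of $\kBisim$ and applying the pointwise fact.

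Concretely, I would first fix $n\le k$ and assume $K,T\kBisim K',T'$. To verify condition (1) in the definition of $\nBisim$, I would take an arbitrary $w\in T$; by condition (1) for $\kBisim$ there is some $w'\in T'$ with $K,w\kbisim K',w'$, and then by the cited pointwise monotonicity fact we get $K,w\nbisim K',w'$, which is exactly what condition (1) for $n$-bisimilarity requires. Condition (2) is handled symmetrically: given $w'\in T'$, condition (2) for $\kBisim$ yields $w\in T$ with $K,w\kbisim K',w'$, hence $K,w\nbisim K',w'$. This establishes $K,T\nBisim K',T'$.

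There is essentially no obstacle here; the only thing worth stating carefully is the auxiliary fact $K,w\kbisim K',w'\Rightarrow K,w\nbisim K',w'$ for $n\le k$, which itself follows by a trivial induction on $k-n$ directly from the recursive definition of $\bisimK$ (each step of the recursion only adds requirements), and which the excerpt already invokes as well known. If one wished to be fully self-contained one would include that one-line induction, but for the purposes of this lemma it can simply be cited.
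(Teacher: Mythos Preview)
Your proposal is correct and matches the paper's approach: the paper does not spell out a proof but simply remarks that the well-known monotonicity of pointed $k$-bisimulation (``$K,w\kbisim K',w'$ implies $K,w\nbisim K',w'$ for $n\le k$'') makes the team version easy, which is precisely the argument you give by unfolding the two back-and-forth clauses in the definition of $\kBisim$.
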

 
%\begin{definition}
We say that a class $\cK\subseteq\KT(\Phi)$ is \emph{closed under team $k$-bisimulation} if 
it satisfies the condition:
\begin{itemize}
\item $(K,T)\in\cK$ and 
$K,T\kBisim K',T'$ implies that $(K',T)\in\cK$.
\end{itemize}
%\end{definition}

The next lemma shows that team $k$-bisimulation satisfies the natural counterparts of 
the back-and-forth properties that we used in defining $\kbisim$, as well as a couple of other
useful properties related to team semantics.

\begin{lemma}\label{team-bis}
Let $k\in\N$, and assume that $(K,T),(K',T')\in\KT(\Phi)$ are such that $K,T\BisimK K',T'$. Then
\begin{enumerate}
\item  for every $S$ s.t. $T[R]S$ there is $S'$ s.t. $T'[R']S'$ and $K,S\kBisim K',S'$;
\item  for every $S'$ s.t. $T'[R']S'$ there is $S$ s.t. $T[R]S$ and $K,S\kBisim K',S'$;
\item  $K,S\kBisim K',S'$ for $S=R[T]$ and $S'=R'[T']$;
\item  for all $T_1,T_2\subseteq T$ s.t. $T=T_1\cup T_2$ there are
$T'_1,T'_2\subseteq T'$ s.t. $T'=T'_1\cup T'_2$, and $K,T_i\BisimK K',T'_i$ for $i\in\{1,2\}$.
\end{enumerate}
\end{lemma}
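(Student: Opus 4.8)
The plan is to establish all four clauses by direct appeal to the back-and-forth conditions built into $\kBisim$ and into the pointed relation $\bisimK$, using throughout the standard fact that $K,w\bisimK K',w'$ provides, for each successor of $w$, a $k$-bisimilar successor of $w'$ and conversely.

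Clause (3) is the easiest, so I would do it first. Given $v\in S=R[T]$, pick $w\in T$ with $wRv$ and, using $K,T\BisimK K',T'$, a partner $w'\in T'$ with $K,w\bisimK K',w'$; the forth condition then yields $v'\in R'[w']\subseteq R'[T']=S'$ with $K,v\kbisim K',v'$. Symmetrically, any $v'\in S'$ arising from $w'R'v'$ with $w'\in T'$ has, via a partner $w\in T$ of $w'$ and the back condition, a $k$-bisimilar $v\in R[w]\subseteq S$. Hence $K,S\kBisim K',S'$.

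Clause (1) is the step I expect to be the main obstacle --- not because it is deep, but because the naive construction of $S'$ secures only one of the two halves of $T'[R']S'$, and one must glue two families of witnesses together. Given $S$ with $T[R]S$, I would set $S':=S'_1\cup S'_2$. For $S'_1$: for each $v\in S$, using $S\subseteq R[T]$ choose $w\in T$ with $wRv$, a partner $w'\in T'$ with $K,w\bisimK K',w'$, and then $v'\in R'[w']$ with $K,v\kbisim K',v'$; this already gives $S'_1\subseteq R'[T']$ and the forth half of $K,S\kBisim K',S'$. For $S'_2$: for each $w'\in T'$ choose a partner $w\in T$ with $K,w\bisimK K',w'$, then (using $T\subseteq R^{-1}[S]$) some $v\in S$ with $wRv$, then $v'\in R'[w']$ with $K,v\kbisim K',v'$. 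Now $S'\subseteq R'[T']$, and for each $w'\in T'$ the element produced in $S'_2$ witnesses $w'\in R'^{-1}[S']$, so $T'[R']S'$; moreover every member of $S'$ was generated alongside a $k$-bisimilar member of $S$, so $K,S\kBisim K',S'$. Clause (2) is then obtained by interchanging the roles of $(K,T)$ and $(K',T')$.

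For clause (4) I would fix choice functions $f\colon T\to T'$ and $g\colon T'\to T$ with $K,w\bisimK K',f(w)$ and $K,g(w')\bisimK K',w'$ for all $w\in T$ and $w'\in T'$, and put $T'_i:=f[T_i]\cup\{\,w'\in T'\mid g(w')\in T_i\,\}$ for $i\in\{1,2\}$. Then $T'_1\cup T'_2=T'$ since $g(w')\in T=T_1\cup T_2$ for every $w'$, and $K,T_i\BisimK K',T'_i$ because each $w\in T_i$ has partner $f(w)\in T'_i$ and each $w'\in T'_i$ is either $f(w)$ for some $w\in T_i$ or has partner $g(w')\in T_i$. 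The only real idea in the whole argument is the doubling trick used to assemble $S'$ in clause (1); everything else is bookkeeping.
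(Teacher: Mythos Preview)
Your argument is correct, but the ``doubling trick'' you advertise as the main idea is avoidable. In the paper's proof of (i) a single set suffices:
\[
S':=\{v'\in R'[T']\mid \exists v\in S:\, K,v\kbisim K',v'\},
\]
i.e., one takes \emph{all} successors of $T'$ that are $k$-bisimilar to something in $S$, rather than choosing one witness per element and then gluing two families. The half of $T'[R']S'$ you feared would fail comes out automatically: given $w'\in T'$, pick a partner $w\in T$ with $K,w\bisimK K',w'$, then $v\in S$ with $wRv$ (using $T\subseteq R^{-1}[S]$), and then by the forth condition a $v'\in R'[w']$ with $K,v\kbisim K',v'$; this $v'$ lies in $S'$ by definition and witnesses $w'\in R'^{-1}[S']$. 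The same simplification applies to (iv): the paper sets $T'_i:=\{w'\in T'\mid \exists w\in T_i:\, K,w\bisimK K',w'\}$, and both $T'_1\cup T'_2=T'$ and $K,T_i\BisimK K',T'_i$ follow directly, with no choice functions $f,g$ needed. Your approach and the paper's are the same back-and-forth argument at heart; the saturated definitions simply absorb the bookkeeping you identified as the obstacle.
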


\begin{proof}
(i) Assume that $T[R]S$. We define 
$$
	S':=\{v'\in R'[T']\mid \exists v\in S: K,v\kbisim K',v' \}.
$$
We will first show that $K,S\kBisim K',S'$.
By the definition of $S'$, we have $\forall v'\in S'\exists v\in S: K,v\kbisim K',v'$.
On the other hand, if $v\in S$, then there is $w\in T$ such that $wRv$. Furthermore,
since $K,T\BisimK K',T'$,
there is $w'\in T'$ such that $K,w\bisimK K',w'$, whence by the definition of $\bisimK$,
there is $v'\in W'$ such that $w'R'v'$ and $K,v\kbisim K',v'$. By the definition of $S'$, $v'$ is in $S'$.
Thus we see that
$\forall v\in S\exists v'\in S': K,v\kbisim K',v'$. 

To see that $T'[R']S'$ holds, note first that $S'\subseteq R'[T']$ by its definition.
Assume then that $w'\in T'$. Since $K,T\BisimK K',T'$, there is
$w\in T$ such that $K,w\bisimK K',w'$. Furthermore, since $T[R]S$, there is $v\in S$
such that $wRv$, and consequently there is $v'\in R'[w']$ such that $K,v\kbisim K',v'$.
By the definition of $S'$ we have now $v'\in S'$. Thus we conclude that $w'\in R'^{-1}[S']$.

(ii) The claim is proved in the same way as (i).

(iii) If $v\in R[T]$, then there is $w\in T$ such that $wRv$. By the assumption $K,T\BisimK K',T'$,
there is $w'\in T'$ such that $K,w\bisimK K',w'$. Hence, there is $v'$ such that $w'R'v'$
and $K,v\kbisim K',v'$. As $w'R'v'$, we have $v'\in R'[T']$. Thus, we conclude that
$\forall v\in R[T]\exists v'\in R'[T']: K,v\kbisim K',v'$. Using a symmetrical argument, we 
see that $\forall v'\in R'[T']\exists v\in R[T]: K,v\kbisim K',v'$.

(iv) Let $T_1,T_2\subseteq T$ be such that $T=T_1\cup T_2$.
Define now 
$$
	T'_i:=\{w'\in T'\mid \exists w\in T_i: K,w\bisimK K',w'\},
$$ 
for $i\in\{1,2\}$. Then by the definition of $T'_i$, $\forall w'\in T'_i\exists w\in T_i: K,w\bisimK K',w'$.
On the other hand, if $w\in T_i$, then $w\in T$, whence there is $w'\in T'$ such that
$K,w\bisimK K',w'$. By the definition of $T'_i$, then $w'$ is in $T'_i$. Thus we conclude that
$\forall w\in T_i\exists w'\in T'_i: K,w\bisimK K',w'$, as desired.
\end{proof}

\subsection{Characterizing the expressive power of $\ML(\idis)$}

Our goal is to prove that definability in $\ML(\idis)$ can be characterized by downward closure
and closure under team $k$-bisimulation. We already know that all $\ML(\idis)$-definable classes
are downward closed (see Proposition \ref{down-cl}). The next step is to prove that
$\ML(\idis)$-definable classes are closed under team $k$-bisimulation for some $k$.

\begin{theorem}\label{bisim-cl}
Let $\Phi$ be a set of proposition symbols, and let $\cK\subseteq\KT(\Phi)$. 
If $\cK$ is definable in $\ML(\idis)$, then there
is a $k\in\N$ such that $\cK$ is closed under $k$-bisimulation.
\end{theorem}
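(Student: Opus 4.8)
The plan is to prove, by induction on the structure of $\ML(\idis)$-formulas, that truth in a team is invariant under team bisimulation of sufficiently large finite depth, with the witnessing $k$ being the modal depth of the defining formula. Concretely, fix $\varphi\in\ML(\idis)(\Phi)$ with $\cK=\Vert\varphi\Vert$, extend $\md$ to $\ML(\idis)$ by $\md(\psi\idis\chi):=\max\{\md(\psi),\md(\chi)\}$, and put $k:=\md(\varphi)$. I would establish the following claim: for every $\psi\in\ML(\idis)(\Phi)$, every $m\ge\md(\psi)$, and all $(K,T),(K',T')\in\KT(\Phi)$, if $K,T\mBisim K',T'$ and $K,T\models\psi$, then $K',T'\models\psi$. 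Instantiating the claim with $\psi=\varphi$ and $m=k$ then gives directly that $\cK$ is closed under team $k$-bisimulation, which is the assertion of the theorem.

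The induction itself is routine. For the literals $\psi=p$ and $\psi=\lnot p$ it suffices that, for team $0$-bisimilar models, every state on one side is $0$-bisimilar to some state on the other side and therefore agrees with it on all proposition symbols; hence $T\subseteq V(p)$ transfers to $T'\subseteq V(p)$, and $T\cap V(p)=\emptyset$ transfers to $T'\cap V(p)=\emptyset$. For $\psi=\psi_1\land\psi_2$ and for the intuitionistic disjunction $\psi=\psi_1\idis\psi_2$ the induction hypothesis applies to the very same team and the very same $m$, since $\md(\psi_i)\le\md(\psi)\le m$. For the splitting disjunction $\psi=\psi_1\lor\psi_2$, a witnessing decomposition $T=T_1\cup T_2$ is transported along the bisimulation by Lemma \ref{team-bis}(iv) to a decomposition $T'=T_1'\cup T_2'$ with $K,T_i\mBisim K',T_i'$, and the induction hypothesis applied to each part yields $K',T'\models\psi$; note that part (iv) preserves the bisimulation index, so no depth is spent here. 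For the modal cases $\psi=\Diamond\psi_1$ and $\psi=\Box\psi_1$ we have $m\ge\md(\psi)=\md(\psi_1)+1$, hence $m\ge 1$ and $m-1\ge\md(\psi_1)$; applying Lemma \ref{team-bis}(i) (for $\Diamond$, to a suitable successor team $S$ with $T[R]S$), respectively Lemma \ref{team-bis}(iii) (for $\Box$, to the image team $R[T]$), supplies the needed $(m-1)$-bisimilar successor team, and the induction hypothesis closes the case.

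I do not expect a genuine obstacle here; the only delicate points are the bookkeeping of the bisimulation depth and the precise fit with Lemma \ref{team-bis}. Specifically, the induction claim must carry the quantifier ``for every $m\ge\md(\psi)$'' rather than a single fixed $k$, so that the modal steps may consume one unit of depth while the Boolean and intuitionistic steps consume none; and one should recall that Lemma \ref{team-bis} is phrased for the index $k+1\ge 1$, so at modal depth $0$ the splitting-disjunction step either appeals to the evident analogue of part (iv) at index $0$ or simply re-runs its trivial decomposition argument by hand. With these observations the structural induction goes through and the theorem follows with $k=\md(\varphi)$.
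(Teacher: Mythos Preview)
Your proposal is correct and follows essentially the same route as the paper: structural induction on the formula with $k=\md(\varphi)$, using Lemma~\ref{team-bis}(iv) for $\lor$, (i) for $\Diamond$, and (iii) for $\Box$. Your device of carrying ``for every $m\ge\md(\psi)$'' in the induction hypothesis is a clean alternative to the paper's appeal to the monotonicity of $\kBisim$ in $k$, but otherwise the arguments coincide.
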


\begin{proof}
Assume that $\varphi\in\ML(\idis)$. We prove by induction on 
$\varphi$ that the class $\Vert\varphi\Vert$ is closed under 
$k$-bisimulation, where $k=\md(\varphi)$. 
\begin{itemize}
\item Let $\varphi= p\in\Phi$, and assume that $K,T\models\varphi$ and $K,T\kBisim K',T'$
for $k=0$.
Then $K,w\models p$ for all $w\in T$, and
for each $w'\in T'$ there is $w\in T$ such that $K,w\obisim K',w'$.
Thus, for all $w'\in T'$, $K',w'\models p$, whence $K',T'\models\varphi$.
\item The case $\varphi= \lnot p$ is similar to the previous one.
\item Let $\varphi=\psi\lor\theta$, and assume that $K,T\models\varphi$ and $K,T\kBisim K',T'$,
where $k=\md(\varphi)=\max\{\md(\psi),\md(\theta)\}$. Then there are $T_1,T_2\subseteq T$
such that $T=T_1\cup T_2$, $K,T_1\models \psi$ and $K,T_2\models \theta$.

By Lemma \ref{team-bis}(iv), there are subteams $T'_1,T'_2\subseteq T'$ such that
$T'=T'_1\cup T'_2$ and $K,T_i\kBisim K',T'_i$ for $i\in\{1,2\}$,
whence $K,T_1\mBisim K',T'_1$ and $K,T_2 \nBisim K',T'_2$, where $m=\md(\psi)$
and $n=\md(\theta)$. By induction hypothesis, $K',T'_1\models\psi$ and
$K',T'_2\models\theta$. Thus, we conclude that $K',T'\models\varphi$.
\item The cases $\varphi=\psi\land\theta$ and $\varphi=\psi\idis\theta$ are straightforward.
\item Let $\varphi=\Diamond\psi$, and assume that $K,T\models\varphi$ and $K,T\kBisim K',T'$,
where $k=\md(\varphi)=\md(\psi)+1$. Then there is a team $S$ on $K$ such that 
$T[R]S$ and $K,S\models\psi$.
By Lemma \ref{team-bis}(i), there is a team $S'$ such that $T'[R']S'$ and
$K,S\Bisimk K',S'$. By induction hypothesis, $K',S'\models\psi$, and
consequently $K',T'\models\varphi$.
\item Let $\varphi=\Box\psi$, and assume that $K,T\models\varphi$ and $K,T\kBisim K',T'$,
where $k=\md(\varphi)=\md(\psi)+1$. Then $K,R[T]\models\psi$, and by Lemma \ref{team-bis}(iii),
$K,R[T]\Bisimk K',R'[T']$. Thus, by induction hypothesis, $K',R'[T']\models\psi$, and consequently
$K',T'\models\varphi$.
\end{itemize}
\end{proof}

Next we prove that downward closure and closure under team $k$-bisimulation are
together a sufficient condition for $\ML(\idis)$-definability.

\begin{theorem}\label{mlidis-def}
Let $\Phi$ be a finite set of proposition symbols and let $\cK\subseteq\KT(\Phi)$.
Assume that $\cK$ is downward closed and closed
under $k$-bisimulation for some $k\in\N$. Then $\cK$ is definable in $\ML(\idis)$.
\end{theorem}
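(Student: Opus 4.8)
The plan is to construct, for a downward-closed class $\cK$ that is closed under team $k$-bisimulation, an explicit $\ML(\idis)$-formula defining it. The starting point is that over a finite $\Phi$, there are only finitely many $k$-bisimulation types of pointed models, and each is characterized by its Hintikka formula $\chi^k_{K,w}$ (Proposition \ref{hintikka}). Since $\cK$ is closed under team $k$-bisimulation, membership $(K,T)\in\cK$ depends only on the \emph{set of $k$-bisimulation types realized in $T$}; call this set $\tau(K,T)\subseteq \mathrm{Tp}$, where $\mathrm{Tp}$ is the (finite) set of $k$-types. By downward closure, if $(K,T)\in\cK$ and $S\subseteq T$ then $(K,S)\in\cK$, so the collection of ``good'' type-sets $\{\tau(K,T)\mid (K,T)\in\cK\}$ is itself downward closed under $\subseteq$.

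First I would fix, for each $k$-type $t$, a representative Hintikka formula $\chi_t$, and for a finite set of types $\sigma=\{t_1,\dots,t_m\}$ define an $\ML$-formula $\psi_\sigma$ asserting ``the realized types are exactly those in $\sigma$'': take $\psi_\sigma := \bigvee_{i=1}^m \chi_{t_i} \;\wedge\; \bigwedge_{i=1}^m \Diamond\chi_{t_i}$ — wait, more carefully, in team semantics $\bigvee$ is the split disjunction, so $K,T\models \bigvee_{i} \chi_{t_i}$ says $T$ can be partitioned into pieces each of which is contained in the extension of some $\chi_{t_i}$, i.e. every state of $T$ has a type in $\sigma$; and to force every type in $\sigma$ to actually occur I would conjoin a formula that is true exactly when some state of type $t_i$ is present. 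Since ``type $t_i$ occurs in $T$'' is not itself expressible by a single flat $\ML$-formula evaluated on a team, the cleanest route is: enumerate the good type-sets $\sigma_1,\dots,\sigma_N$ (those with $\Vert$ witnessed by some member of $\cK$), and for each one build a formula $\varphi_{\sigma_j}$ whose models-with-teams are exactly those $(K,T)$ with $\tau(K,T)=\sigma_j$; then set $\varphi := \bigidis_{j=1}^N \varphi_{\sigma_j}$. Using the intuitionistic disjunction here is essential because the class $\cK$ is a union over the finitely many admissible type-sets, and $\idis$ is precisely the connective that takes unions of definable classes.

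The key remaining step is to define $\varphi_{\sigma}$ for a single admissible type-set $\sigma$. The ``every state has a type in $\sigma$'' part is $\Diamond$-free and handled by the split disjunction $\bigvee_{t\in\sigma}\chi_t$ together with downward closure (note $\chi_t$ on a singleton picks out states of type $t$, and the split disjunction then exactly says ``$T\subseteq$ union of the $\Vert\chi_t\Vert$''). The ``each $t\in\sigma$ is realized'' part I would encode by exploiting that $\cK$ is downward closed but we want the \emph{maximal} teams: actually, since $\cK$ is downward closed, it suffices that $\varphi$ define a class $\cK'$ that agrees with $\cK$ on which teams it contains — and for a downward-closed $\cK$ it is enough to have $\bigcup \Vert\varphi_{\sigma_j}\Vert^K$ contain, for each $K$, every $\subseteq$-maximal $T$ with $(K,T)\in\cK$ and be contained in $\cK$. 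So I can drop the ``exactly'' and just use $\varphi_\sigma := \bigvee_{t\in\sigma}\chi_t$ for the maximal admissible $\sigma$'s; downward closure of both sides then gives equality of the team-sets. The main obstacle, and the step needing the most care, is verifying this last reduction: that it genuinely suffices to capture the maximal good type-sets via split disjunctions of Hintikka formulas and then take an $\idis$ over them, i.e. proving $\Vert\varphi\Vert^K = \{T : (K,T)\in\cK\}$ for every $K$ — both inclusions follow from (a) closure under team $k$-bisimulation, which makes membership depend only on $\tau(K,T)$, (b) downward closure, and (c) the fact that $K,T\models\bigvee_{t\in\sigma}\chi_t$ iff every state of $T$ has type in $\sigma$, which in turn rests on Proposition \ref{ML extension} and Proposition \ref{hintikka}. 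Finiteness of $\Phi$ is used throughout to guarantee $\mathrm{Tp}$ is finite so that all the disjunctions are finite and $\varphi\in\ML(\idis)$.
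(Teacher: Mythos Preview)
Your proposal is correct and, despite the exploratory detours, lands on essentially the paper's own construction: the paper defines $\varphi := \midis_{(K,T)\in\cK}\bigvee_{w\in T}\chi^k_{K,w}$, which is exactly your $\midis_\sigma\bigvee_{t\in\sigma}\chi_t$ once one observes that $\bigvee_{w\in T}\chi^k_{K,w}$ depends only on the type-set $\tau(K,T)$. Your restriction to \emph{maximal} good type-sets is a harmless optimization, and the verification of both inclusions proceeds just as you outline, using downward closure, Proposition~\ref{ML extension}, and Proposition~\ref{hintikka}.
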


\begin{proof}
Let $\varphi$ be the formula
$$
	\Idis_{(K,T)\in \cK}\;\bigvee_{w\in T}\;\chi^k_{K,w},
$$
where $\chi^k_{M,w}$ is the $k$-th Hintikka-formula of the pair $(K,w)$.
Note that since $\Phi$ is finite, there are only finitely many different Hintikka-formulas
$\chi^k_{K,w}$. Thus, the disjunction $\bigvee_{w\in T}$ and the intuitionistic disjunction
$\midis_{(K,T)\in \cK}$ in $\varphi$ are essentially finite, whence $\varphi\in\ML(\idis)$.
We will now prove that $\varphi$ defines $\cK$.

Assume first that $(K_0,T_0)\in\cK$. By Proposition \ref{ML extension}, 
$K_0,\{v\}\models\chi^k_{K_0,v}$ for each $v\in T_0$.
Thus, $K_0,T_0\models\bigvee_{w\in T_0}\;\chi^k_{K_0,w}$, and consequently, $K_0,T_0\models\varphi$.

Assume for the other direction that $K_0,T_0\models\varphi$. Then there is a pair $(K,T)\in\cK$ such that 
$K_0,T_0\models\bigvee_{w\in T}\;\chi^k_{K,w}$. Thus, there are subsets $T_w$, $w\in T$, of $T_0$
such that $T_0=\bigcup_{w\in T} T_w$, and $K_0,T_w\models\chi^k_{K,w}$. 
By Proposition~\ref{ML extension}, $K_0,v\models\chi^k_{K,w}$ for every $v\in T_w$.
Let $T'=\{w\in T\mid T_w\not=\emptyset\}$. Since $\cK$ is downward closed, we have
$(K,T')\in \cK$. Observe now that for every $v\in T_0$ there is $w\in T'$ such that 
$K_0,v\models\chi^k_{K,w}$,
and for every $w\in T'$ there is $v\in T_0$ such that $K_0,v\models\chi^k_{K,w}$.
By Proposition \ref{hintikka} this means that $K,T'\kBisim K_0,T_0$. Since  $\cK$ is closed under
$k$-bisimulation, we conclude that $(K_0,T_0)\in \cK$.
\end{proof}

Putting Proposition \ref{down-cl}, Theorem \ref{bisim-cl} and Theorem \ref{mlidis-def}
together, we finally get the promised characterization for the expressive power 
of $\ML(\idis)$.

\begin{corollary}\label{mlidis-exp}
A class $\cK\subseteq\KT(\Phi)$ is definable in $\ML(\idis)$ if and only if
$\cK$ is downward closed and  there exists $k\in\N$ such that $\cK$ is closed under $k$-bisimulation.
\end{corollary}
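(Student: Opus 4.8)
The plan is to assemble the corollary directly from the three results just established. For the left-to-right direction, suppose $\cK=\Vert\varphi\Vert$ for some $\varphi\in\ML(\idis)$. By Proposition~\ref{down-cl}, $\ML(\idis)$ is downward closed, i.e.\ condition $(*)$ holds for $\varphi$; unwinding the definitions, $(K,T)\in\Vert\varphi\Vert$ and $S\subseteq T$ give $S\in\Vert\varphi\Vert^K$, that is $(K,S)\in\Vert\varphi\Vert$, so $\cK$ is downward closed in the required sense. Applying Theorem~\ref{bisim-cl} to $\varphi$ then yields a $k\in\N$ (concretely $k=\md(\varphi)$) such that $\cK$ is closed under team $k$-bisimulation.

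For the converse, assume $\cK$ is downward closed and closed under team $k$-bisimulation for some $k\in\N$. Here I would invoke Theorem~\ref{mlidis-def} verbatim: it produces an $\ML(\idis)$-formula $\varphi$ — the intuitionistic disjunction, over $(K,T)\in\cK$, of the formulas $\bigvee_{w\in T}\chi^k_{K,w}$ — with $\Vert\varphi\Vert=\cK$, so $\cK$ is $\ML(\idis)$-definable. The hypothesis that must be carried along for this direction is finiteness of $\Phi$: it is what guarantees that only finitely many Hintikka formulas $\chi^k_{K,w}$ occur, so that $\varphi$ is a genuine (finite) formula of $\ML(\idis)$. I would therefore keep the standing assumption that $\Phi$ is finite explicit in the statement.

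There is no genuine obstacle here, since all the substantive work already lives in Theorems~\ref{bisim-cl} and~\ref{mlidis-def}. The only points needing a moment's care are the bookkeeping just noted: translating the model-local formulation $(*)$ of downward closure into the assertion that the global class $\Vert\varphi\Vert$ is downward closed, and keeping track of the finiteness assumption on $\Phi$, which Theorem~\ref{mlidis-def} relies on while Theorem~\ref{bisim-cl} and Proposition~\ref{down-cl} do not.
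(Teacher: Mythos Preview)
Your proposal is correct and follows exactly the paper's own approach: the corollary is obtained by combining Proposition~\ref{down-cl}, Theorem~\ref{bisim-cl}, and Theorem~\ref{mlidis-def}, with no additional argument needed. Your observation that the finiteness of $\Phi$ is required for the right-to-left direction (via Theorem~\ref{mlidis-def}) is a valid point that the paper's statement of the corollary leaves implicit.
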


Note that from the proof of Theorem \ref{mlidis-def} we obtain the following
normal form for $\ML(\idis)$-formulas: every formula $\varphi\in\ML(\idis)$
is equivalent with a formula of the form $\midis \Psi$, where $\Psi$
is a finite set of $\ML$-formulas. This normal form was proved in \cite{lohvo13},
but the idea goes back to \cite{Sevenster:2009}. Note further that each formula
in $\Psi$ can be assumed to be a disjunction of Hintikka formulas 
$\chi^k_{K,w}$, where $k$ is the modal depth of $\varphi$.

\section{$\EMDL$ is equivalent to $\ML(\idis)$}\label{emdl-idis}

By Proposition \ref{comparison}, we know that $\ML(\idis)$ is at least as expressive
as $\EMDL$. In this section we show that the converse is also true, thus solving the
problem that was left open in \cite{EHMMVV13}. 

\begin{theorem}\label{idis-emdl}
$\ML(\idis)\le\EMDL$.
\end{theorem}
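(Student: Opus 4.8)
The plan is to give a direct syntactic translation from $\ML(\idis)$ to $\EMDL$, by induction on the structure of formulas. The only non-trivial case is the intuitionistic disjunction $\psi \idis \theta$; every other connective ($p$, $\lnot p$, $\land$, $\lor$, $\Diamond$, $\Box$) is shared by the two logics, so the induction step for those is immediate once we have the translation of the subformulas. Hence the whole proof reduces to showing: if $\psi$ and $\theta$ are (translatable to) $\EMDL$-formulas, then $\psi \idis \theta$ is expressible in $\EMDL$. By the normal form extracted right after Theorem~\ref{mlidis-def} (every $\ML(\idis)$-formula is equivalent to $\midis\Psi$ for a finite set $\Psi$ of $\ML$-formulas), it in fact suffices to show that a finite intuitionistic disjunction $\chi_1 \idis \cdots \idis \chi_m$ of \emph{plain} $\ML$-formulas is $\EMDL$-definable; I would, however, phrase the induction for arbitrary $\EMDL$-subformulas $\psi,\theta$ since that is cleaner and the key idea is the same.

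The key idea for handling $\psi \idis \theta$ is the standard trick, already noted by V\"a\"an\"anen, that one can use a dependence atom over a formula that is forced to be constant to "split" a team into the part where $\psi$ holds and the part where $\theta$ holds, but here we must be careful because in team semantics $\psi\idis\theta$ is \emph{not} in general equivalent to $\psi\lor\theta$. The route I would take: introduce a fresh subformula-like device. Concretely, recall that $\dep[\alpha]$ asserts $\alpha$ has constant truth value across the team, and that $(\dep[\alpha]\land\psi)\lor(\dep[\lnot\alpha]\land\theta)$ splits $T$ into $T_1\cup T_2$ with $\alpha$ constant on each. If we could choose $\alpha$ so that on \emph{any} subteam $S$, $\alpha$ is true throughout $S$ exactly when $S$ lies in a region where we "want to use $\psi$", this would give the equivalence. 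The clean way to realize this is via the observation that $\EMDL \equiv \ML(\idis_\ML)$ (proved in \cite{EHMMVV13}), i.e., $\EMDL$ already captures intuitionistic disjunctions of $\ML$-formulas; combined with the just-derived normal form $\varphi \equiv \midis\Psi$ with $\Psi\subseteq\ML$, we get $\ML(\idis)\le\ML(\idis_\ML)\equiv\EMDL$ immediately. So the cleanest proof is: apply the normal form, observe $\midis\Psi$ is an $\ML(\idis_\ML)$-formula, invoke $\ML(\idis_\ML)\equiv\EMDL$.

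If we instead want a self-contained argument not leaning on $\ML(\idis_\ML)\equiv\EMDL$, the main work is to express $\chi_1\idis\cdots\idis\chi_m$, with each $\chi_i\in\ML$, inside $\EMDL$. Here I would use that each $\chi_i$ can be taken to be a disjunction of Hintikka formulas of a fixed modal depth $k$ (from the remark after Corollary~\ref{mlidis-exp}), and that over a fixed finite $\Phi$ there are only finitely many depth-$k$ Hintikka formulas $\chi^k_{K,w}$, say $\rho_1,\dots,\rho_N$. A team $T$ satisfies $\chi_i = \bigvee_{j\in A_i}\rho_j$ iff every world of $T$ realizes some $\rho_j$ with $j\in A_i$, i.e., iff the "profile" of $T$ (the set of $\rho_j$ realized in $T$) is contained in $A_i$. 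So $T\models\midis_i\chi_i$ iff the profile of $T$ is contained in $\bigcup\{A_i\}$ — wait, that is the \emph{flat} disjunction; intuitionistic disjunction requires the profile to be contained in \emph{some single} $A_i$. To force this in $\EMDL$ one uses dependence atoms to pin down, uniformly across $T$, which block $A_i$ is in force. The main obstacle, and the part needing care, is exactly this: encoding "there is a single $i$ such that the whole team lives inside block $A_i$" using dependence atoms, whose defining condition quantifies over \emph{pairs} of worlds rather than giving a global choice. I expect the resolution to be to add, for each of the finitely many Hintikka-types, a defining $\ML$-formula and use $\dep[\cdot]$-atoms over these type-formulas together with the flat $\lor$, mirroring exactly the construction that establishes $\ML(\idis_\ML)\le\EMDL$ in \cite{EHMMVV13}; since the excerpt permits citing that equivalence, the shortest correct proof is simply to combine the normal form with $\EMDL\equiv\ML(\idis_\ML)$, and I would write the proof that way, remarking that the normal form reduces $\idis$ to $\idis_\ML$.
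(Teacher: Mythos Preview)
Your route differs from the paper's and has a gap at the key step. The claim that $\midis\Psi$ is an $\ML(\idis_\ML)$-formula fails once $|\Psi|\ge 3$: the fragment $\ML(\idis_\ML)$ forbids one $\idis$ within the scope of another, and any way of writing $\psi_1\idis\psi_2\idis\cdots\idis\psi_m$ with the binary connective places some $\idis$ inside another. Hence the combination of the normal form with $\EMDL\equiv\ML(\idis_\ML)$ does not close the argument; you are still left with expressing a flat \emph{multi-ary} intuitionistic disjunction of $\ML$-formulas in $\EMDL$, and that is exactly the content of the open problem from \cite{EHMMVV13} that the present theorem settles. The translation behind $\ML(\idis_\ML)\le\EMDL$ handles $\alpha\idis\beta$ only for $\alpha,\beta\in\ML$, and it does not iterate, since its output lies in $\EMDL\setminus\ML$.

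In your self-contained sketch you correctly isolate the obstacle---encoding ``there is a single $i$ with the whole team's type-profile contained in $A_i$'' using dependence atoms, which only constrain pairs of worlds---but you do not resolve it; you merely point back to \cite{EHMMVV13}. The paper supplies precisely this missing device, due to Huuskonen: with $\gamma:=\bigwedge_{\psi\in\Psi}\dep[\psi]$ (expressing $|\tset_\Psi(K,T)|\le 1$) one sees by induction that the iterate $\gamma^k:=\gamma^{k-1}\lor\gamma$ expresses $|\tset_\Psi(K,T)|\le k$ in $\EMDL$. From this one builds, for each $(K,T)$ with $K,T\not\models\midis\Psi$, an $\EMDL$-formula $\xi_{K,T}$ saying $\tset_\Psi(K,T)\not\subseteq\tset_\Psi(K',T')$; the conjunction of these $\xi_{K,T}$ is then equivalent to $\midis\Psi$. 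It is this type-counting trick, not a recycling of the binary case, that carries the proof.
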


The proof we give for Theorem \ref{idis-emdl} is an adaptation of the proof in \cite{fanthesis} of the
corresponding result for propositional logic with intuitionistic disjunction and propositional 
dependence atoms. The main idea (Lemma \ref{non-subset}) is originally due to Taneli Huuskonen.

Before proving Theorem \ref{idis-emdl}, we introduce some auxiliary concepts, and
prove a couple of lemmas concerning them.

Let $\Psi$ be a finite set of $\ML(\Phi)$-formulas, and let $K$ be a Kripke model over 
$\Phi$ and $w$ a state in $K$.
The \emph{$\Psi$-type} of $w$ in $K$ is defined as 
$$
	\type_\Psi(K,w):=\{\psi\in\Psi\mid K,w\models\psi\}.
$$ 
Furthermore, the \emph{$\Psi$-type} of a team $T$ of $K$ is just the set of $\Psi$-types
of its elements:
$$
	\tset_\Psi(K,T):=\{\type_\Psi(K,w)\mid w\in T\}.
$$

Each $\Psi$-type $\Gamma\subseteq\Psi$ can be defined by a formula: 
Let 
$$
	\theta_\Gamma:=\bigwedge_{\psi\in\Gamma}\psi\land
	\bigwedge_{\psi\in\Psi\setminus\Gamma}\psi^\lnot
$$
where $\psi^\lnot$ denotes the formula obtained from $\lnot\psi$ by pushing the negations
in front of proposition symbols. Then it is easy to see that $\type_\Psi(K,w)=\Gamma$
if and only if $K,w\models\theta_\Gamma$.

\begin{lemma}\label{typesets}
Assume that $(K,T),(K',T')\in\KT(\Phi)$, and let $\Psi$ be a finite set of $\ML(\Phi)$-formulas.
\begin{enumerate}
\item For each $\psi\in\Psi$, 
$\;K,T\models\psi\;$ if and only if $\;\psi\in{\bigcap\tset_\Psi(K,T)}$.
\item If $K,T\models\midis\Psi$ and $\tset_\Psi(K',T')\subseteq\tset_\Psi(K,T)$,
then $K',T'\models\midis\Psi$.
\end{enumerate}
\end{lemma}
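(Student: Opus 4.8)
The plan is to treat the two claims separately, with claim (i) doing the real work and claim (ment (ii) following almost formally from it together with the semantics of $\midis$. For claim (i), I would first recall that each $\psi\in\Psi$ is an $\ML(\Phi)$-formula, so by Proposition \ref{ML extension} we have $K,T\models\psi$ if and only if $K,w\models\psi$ for every $w\in T$. Then I unwind the definitions: $\psi\in\bigcap\tset_\Psi(K,T)$ means $\psi$ belongs to $\type_\Psi(K,w)$ for every $w\in T$, i.e.\ $K,w\models\psi$ for every $w\in T$ (using also that $K,\{w\}\models\psi \iff K,w\models\psi$ from Proposition \ref{ML extension}). Comparing the two equivalences gives the claim. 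The one point to be slightly careful about is the degenerate case $T=\emptyset$: then $K,T\models\psi$ holds vacuously and $\bigcap\tset_\Psi(K,\emptyset)=\bigcap\emptyset=\Psi$ (the intersection over the empty family), so $\psi\in\bigcap\tset_\Psi(K,T)$ also holds; the equivalence survives.

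For claim (ii), I would argue as follows. Suppose $K,T\models\midis\Psi$. By the semantics of $\midis$ (the finitary version of the clause for $\varovee$), this means there is some $\psi\in\Psi$ with $K,T\models\psi$. By claim (i), $\psi\in\bigcap\tset_\Psi(K,T)$, i.e.\ $\psi\in\type_\Psi(K,w)$ for all $w\in T$. Now the hypothesis $\tset_\Psi(K',T')\subseteq\tset_\Psi(K,T)$ says that every $\Psi$-type realized in $T'$ is already realized in $T$; hence for each $w'\in T'$, $\type_\Psi(K',w')\in\tset_\Psi(K,T)$, so there is $w\in T$ with $\type_\Psi(K',w')=\type_\Psi(K,w)$, and since $\psi$ lies in every such type we get $\psi\in\type_\Psi(K',w')$, i.e.\ $K',w'\models\psi$. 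As $w'\in T'$ was arbitrary, Proposition \ref{ML extension} gives $K',T'\models\psi$, and therefore $K',T'\models\midis\Psi$.

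I do not expect a genuine obstacle here; the lemma is essentially bookkeeping with the definitions of $\type_\Psi$, $\tset_\Psi$ and the downward-flat behaviour of $\ML$-formulas under team semantics. The only thing that needs a moment's attention is making the "$\subseteq$ of type sets transports a common member of all types" step precise — phrasing it via "pick $w\in T$ realizing the same type as $w'$" makes it transparent — together with the empty-team corner case noted above. Everything else is a direct substitution of definitions.
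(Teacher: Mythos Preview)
Your proposal is correct and follows essentially the same approach as the paper: both parts rely on Proposition~\ref{ML extension} to reduce team satisfaction of an $\ML$-formula to pointwise satisfaction, and part~(ii) is derived from part~(i) together with the semantics of $\midis$. The only cosmetic difference is that in (ii) the paper uses the monotonicity $\tset_\Psi(K',T')\subseteq\tset_\Psi(K,T)\Rightarrow\bigcap\tset_\Psi(K,T)\subseteq\bigcap\tset_\Psi(K',T')$ directly and then reapplies (i), whereas you unfold this step element-by-element; your extra remark on the empty-team case is a harmless clarification.
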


\begin{proof}
(i) If $K,T\models\psi$, then by Proposition \ref{ML extension}, $K,w\models\psi$ for every
$w\in T$, which means that $\psi\in\type_\Psi(K,w)$ for every $w\in T$.
On the other hand, if $\psi\in\bigcap\tset_\Psi(K,T)$, then $K,w\models\psi$ for every
$w\in T$. By Proposition \ref{ML extension}, it follows that $K,T\models\psi$.

(ii) Assume that $K,T\models\midis\Psi$ and $\tset_\Psi(K',T')\subseteq\tset_\Psi(K,T)$.
Thus, $K,T\models\psi$ for some $\psi\in\Psi$, and by claim (i), $\psi\in\bigcap\tset_\Psi(K,T)$.
Since $\tset_\Psi(K',T')\subseteq\tset_\Psi(K,T)$, it follows that $\psi\in\bigcap\tset_\Psi(K',T')$.
Thus, $K',T'\models\psi$, and consequently $K',T'\models\midis\Psi$.
\end{proof}

Consider next the formula $\gamma:=\bigwedge_{\psi\in\Psi}\dep[\psi]$. 
It says that the truth value of each $\psi$ in $\Psi$ is constant, whence
$K,T\models\gamma$ if and only if $|\tset_\Psi(K,T)|\le 1$.
Define now recursively 
$$
	\gamma^0:= p\land\lnot p, \qquad
	\gamma^{k+1}: =(\gamma^k\lor\gamma)
$$
It is straightforward to show by induction that for all $k\in\N$, $K,T\models\gamma^k$ if and only if 
$|\tset_\Psi(K,T)|\le k$.

\begin{lemma}\label{non-subset}
Let $\Psi$ be a finite set of $\ML(\Phi)$-formulas.
If $(K,T)\in\KT(\Phi)$, $T\not=\emptyset$, then there is a formula 
$\xi_{K,T}\in\EMDL(\Phi)$ such that for every
$(K',T')\in\KT(\Phi)$
\[
	K',T'\models\xi_{K,T}\quad\iff\quad\tset_\Psi(K,T)\not\subseteq\tset_\Psi(K',T').
\]
\end{lemma}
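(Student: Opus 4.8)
The plan is to build $\xi_{K,T}$ as a finite intuitionistic disjunction, indexed by the types that appear in $\tset_\Psi(K,T)$, of formulas each of which asserts that the target team "misses" the corresponding type while staying small enough to be controlled. Fix $(K,T)$ with $T\neq\emptyset$, and write $\tset_\Psi(K,T)=\{\Gamma_1,\dots,\Gamma_m\}$ (with $m\ge 1$). The condition $\tset_\Psi(K,T)\not\subseteq\tset_\Psi(K',T')$ means exactly that some $\Gamma_j$ fails to be realized in $(K',T')$, i.e.\ no world of $T'$ has $\Psi$-type $\Gamma_j$, equivalently $K',w\not\models\theta_{\Gamma_j}$ for every $w\in T'$, equivalently $K',T'\models\psi^\lnot$ for $\theta_{\Gamma_j}$ suitably negated — more precisely $K',T'\models\lnot\theta_{\Gamma_j}$ in the sense that some literal of $\theta_{\Gamma_j}$ flips at each world. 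So the first task is to produce, for each $j$, an $\ML$-formula (call it $\neg\theta_{\Gamma_j}$, built by pushing negations inward through the conjunction $\theta_{\Gamma_j}$ into a disjunction) such that $K',T'\models\neg\theta_{\Gamma_j}$ iff no world of $T'$ realizes $\Gamma_j$. This is routine: team semantics of disjunction says $K',T'\models\neg\theta_{\Gamma_j}$ iff $T'$ can be covered by subteams each avoiding one disjunct's obstruction, which by downward closure is equivalent to every singleton of $T'$ satisfying $\neg\theta_{\Gamma_j}$, hence to $\Gamma_j\notin\tset_\Psi(K',T')$.

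Naively one would then set $\xi_{K,T}:=\bigvee^{\!\!\varovee}_{j}\neg\theta_{\Gamma_j}$ — but wait, $\EMDL$ has no intuitionistic disjunction, so this move is unavailable, and that is exactly the obstacle the lemma is engineered to overcome. The trick (the Huuskonen idea alluded to in the text) is to simulate the intuitionistic disjunction over a \emph{bounded} index set using the dependence-atom machinery already set up: the formulas $\gamma=\bigwedge_{\psi\in\Psi}\dep[\psi]$ and its iterates $\gamma^k$, which assert $|\tset_\Psi(K',T')|\le k$. The point is that if $\tset_\Psi(K,T)\not\subseteq\tset_\Psi(K',T')$ then $\tset_\Psi(K',T')$ omits at least one of the $m$ types $\Gamma_1,\dots,\Gamma_m$, so $\tset_\Psi(K',T')$ is contained in the union of the other $m-1$ types together with whatever types lie outside $\tset_\Psi(K,T)$ — but one has to be careful, because types outside $\tset_\Psi(K,T)$ are uncontrolled. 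The clean way: observe that $\tset_\Psi(K,T)\not\subseteq\tset_\Psi(K',T')$ is equivalent to the disjunction "$\Gamma_1\notin\tset_\Psi(K',T')$ or $\dots$ or $\Gamma_m\notin\tset_\Psi(K',T')$", and each disjunct $\Gamma_j\notin\tset_\Psi(K',T')$ is equivalent to $K',T'\models\neg\theta_{\Gamma_j}$. One then encodes this $m$-fold plain disjunction as a single $\EMDL$ formula by the standard "dependence-atom disjunction" construction: split $T'$ into $\le m$ pieces and use dependence atoms to force that on the union the relevant type is globally absent. Concretely, I would take

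$$
\xi_{K,T}\ :=\ \theta_{\Gamma_1}\lor\cdots\lor\theta_{\Gamma_{m}}\lor\Bigl(\gamma\ \land\ \neg\theta_{\Gamma_1}\lor\cdots\lor\neg\theta_{\Gamma_m}\Bigr),
$$

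adjusting the exact shape after checking the semantics; the role of the $\gamma$-conjunct is to collapse the choice of which type is missing into a single split where that type is globally absent. The heart of the verification is: if $\tset_\Psi(K,T)\not\subseteq\tset_\Psi(K',T')$, some $\Gamma_j$ is missing from $T'$; then $T'$ itself is a team on which $\theta_{\Gamma_j}$ is uniformly false, so one witness piece satisfies $\neg\theta_{\Gamma_j}$ and the remaining (empty) pieces vacuously handle the rest — giving $K',T'\models\xi_{K,T}$. Conversely, if $\tset_\Psi(K,T)\subseteq\tset_\Psi(K',T')$, then every $\Gamma_j$ is realized in $T'$, so $T'$ contains a world of each type $\Gamma_j$; any cover of $T'$ by $m$ subteams must put each such world into some piece, and a piece containing a world of type $\Gamma_j$ cannot satisfy $\neg\theta_{\Gamma_j}$ — a pigeonhole argument then shows no valid split exists, whence $K',T'\not\models\xi_{K,T}$.

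The main obstacle, then, is genuinely the \emph{syntactic} one: $\EMDL$ lacks $\varovee$, so the "or over $j$" must be manufactured from $\lor$ together with dependence atoms, and getting the bookkeeping of the split-and-pigeonhole argument exactly right — in particular making sure the empty-team and downward-closure subtleties are handled (e.g.\ that an empty piece satisfies anything, that $T\neq\emptyset$ is used where needed) — is where care is required. I expect the cleanest route is to first prove a small auxiliary fact of the form "$K',T'\models\neg\theta_{\Gamma}$ iff $\Gamma\notin\tset_\Psi(K',T')$", then a second auxiliary fact bounding $|\tset_\Psi|$ via $\gamma^k$ (already noted in the text), and finally to assemble $\xi_{K,T}$ and verify both directions by the pigeonhole argument sketched above, with the reverse direction (non-satisfaction when the inclusion holds) being the more delicate half.
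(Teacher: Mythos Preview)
You have correctly identified the obstacle (no $\idis$ in $\EMDL$) and the relevant building blocks (the type-formulas $\theta_\Gamma$ and the counting formulas $\gamma^k$), but the proposal as it stands has a genuine gap: the formula you write down is not pinned down, and the pigeonhole argument you sketch for the converse direction actually fails.

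Concretely, suppose one reads your candidate as (some variant of) the split-disjunction $\neg\theta_{\Gamma_1}\lor\cdots\lor\neg\theta_{\Gamma_m}$, possibly with a $\gamma$-conjunct on each piece. Take the worst case $\tset_\Psi(K',T')=\tset_\Psi(K,T)=\{\Gamma_1,\dots,\Gamma_m\}$ with $m\ge 2$. Then a valid split \emph{does} exist: send every world of type $\Gamma_i$ into piece $T_{i+1\bmod m}$. Each piece $T_j$ is monotypic (so it satisfies $\gamma$) and contains no world of type $\Gamma_j$ (so it satisfies $\neg\theta_{\Gamma_j}$). Hence $K',T'$ satisfies your formula even though $\tset_\Psi(K,T)\subseteq\tset_\Psi(K',T')$. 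The ``pigeonhole'' step---``a piece containing a world of type $\Gamma_j$ cannot satisfy $\neg\theta_{\Gamma_j}$, so no valid split exists''---is not a pigeonhole conclusion at all: nothing forces the world of type $\Gamma_j$ to land in piece $j$.

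The paper's construction sidesteps this by not trying to simulate the $m$-fold intuitionistic disjunction directly. Writing $k+1=|\tset_\Psi(K,T)|$ and $X=\powerset{\Psi}\setminus\tset_\Psi(K,T)$, it takes
\[
\xi_{K,T}:=\Bigl(\bigvee_{\Gamma\in X}\theta_\Gamma\Bigr)\lor\gamma^k .
\]
The first disjunct peels off all worlds whose type lies \emph{outside} $\tset_\Psi(K,T)$; what remains is exactly the worlds realizing types \emph{inside} $\tset_\Psi(K,T)$, and $\gamma^k$ says this remainder has at most $k$ types. Thus $K',T'\models\xi_{K,T}$ iff $|\tset_\Psi(K,T)\cap\tset_\Psi(K',T')|\le k$, which, since $|\tset_\Psi(K,T)|=k+1$, is exactly $\tset_\Psi(K,T)\not\subseteq\tset_\Psi(K',T')$. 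Note that here the iterate $\gamma^k$ with $k=m-1$ is essential; a single $\gamma$ would not suffice. Your plan can be salvaged, but only by moving to this ``separate the outside types, then count the inside types'' idea rather than the per-type pigeonhole split.
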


\begin{proof}
Let $|\tset_\Psi(K,T)|=k+1$. We define 
\[
	\xi_{K,T}:=\Bigl(\bigvee_{\Gamma\in X}
	\theta_\Gamma\Bigr)\lor\gamma^k,
\]
where $X=\powerset{\Psi}\setminus\tset_\Psi(K,T)$.
Now given a pair $(K',T')\in\KT(\Phi)$ we have
\begin{eqnarray*}
	K',T'\models\xi_{K,T}&\;\iff\;&\text{there are } T_1,T_2\text{ such that } T_1\cup T_2=T'\text{ and } \\
	&&\qquad\qquad\tset_\Psi(K',T_1)\subseteq X\text{ and }|\tset_\Psi(K',T_2)|\le k \\
	&\;\iff\;& |\tset_\Psi(K,T)\cap\tset_\Psi(K',T')|\le k\\
	&\;\iff\;& \tset_\Psi(K,T)\not\subseteq\tset_\Psi(K',T').
\end{eqnarray*}
\end{proof}

\begin{proof}{\bf of Theorem \ref{idis-emdl}}.
Let $\varphi$ be an $\ML(\idis)(\Phi)$-formula. By the normal form derived in the proof
of Theorem \ref{mlidis-def}, 
we may assume that $\varphi$ is of the form $\midis\Psi$,
where  $\Psi$ is a finite set of $\ML(\Phi)$-formulas.

Let $\eta$ be the formula
$$
	\bigwedge_{(K,T)\in\overline{\Vert\varphi\Vert}}\xi_{K,T},
$$ 
where $\overline{\Vert\varphi\Vert}=\KT(\Phi)\setminus\Vert\varphi\Vert$ and $\xi_{K,T}$ is as in 
Lemma \ref{non-subset}.
Since $\Psi$ is finite, there are finitely many different formulas of the form
$\xi_{K,T}$. Thus, the conjunction in $\eta$ is essentially finite, and hence $\eta$
is in $\EMDL$. 

To prove that $\Vert\eta\Vert=\Vert\varphi\Vert$, let $(K_0,T_0)\in\KT(\Phi)$. Assume first
that $(K_0,T_0)\in\Vert\varphi\Vert$, and consider any pair $(K,T)\in\overline{\Vert\varphi\Vert}$. 
It follows from Lemma~\ref{typesets} that
$\tset_\Psi(K,T)\not\subseteq\tset_\Psi(K_0,T_0)$, whence by Lemma \ref{non-subset},
$K_0,T_0\models\xi_{K,T}$. Thus we see that $(K_0,T_0)\in\Vert\eta\Vert$.

Assume then that $(K_0,T_0)\not\in\Vert\varphi\Vert$. Since %trivially 
$\tset_\Psi(K_0,T_0)
\subseteq\tset_\Psi(K_0,T_0)$, it follows from Lemma \ref{non-subset}
that $K_0,T_0\not\models\xi_{K_0,T_0}$. Thus we conclude that $(K_0,T_0)\not\in\Vert\eta\Vert$.
\end{proof}

Combining Proposition \ref{comparison} and Theorem \ref{idis-emdl}, we see
that the expressive power of $\EMDL$ and $\ML(\idis)$ coincide. This means 
that the characterization for the expressive power of $\ML(\idis)$ given in Corollary
\ref{mlidis-exp} is true for $\EMDL$, too.

\begin{corollary}
$\EMDL\equiv\ML(\idis)$.
\end{corollary}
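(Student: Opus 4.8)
The plan is to obtain the equivalence $\EMDL\equiv\ML(\idis)$ by simply composing the two inequalities established earlier in the paper. Recall that $\calL\equiv\calL'$ by definition means both $\calL\le\calL'$ and $\calL'\le\calL$, so it suffices to verify each direction separately.

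First I would invoke Proposition~\ref{comparison}, which already records $\EMDL\le\ML(\idis)$ (this is the containment proved in \cite{EHMMVV13} by extending V\"a\"an\"anen's translation of $\MDL$ into $\ML(\idis)$ to cover dependence atoms with arbitrary $\ML$-formulas as arguments). This half requires nothing new here.

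For the converse direction $\ML(\idis)\le\EMDL$, I would appeal directly to Theorem~\ref{idis-emdl}, which we may assume. Concretely, given any $\varphi\in\ML(\idis)(\Phi)$, the normal form from the proof of Theorem~\ref{mlidis-def} lets us assume $\varphi$ has the shape $\midis\Psi$ for a finite set $\Psi$ of $\ML(\Phi)$-formulas; then the formula $\eta=\bigwedge_{(K,T)\in\overline{\Vert\varphi\Vert}}\xi_{K,T}\in\EMDL(\Phi)$ constructed via Lemma~\ref{non-subset} satisfies $\Vert\eta\Vert=\Vert\varphi\Vert$, which is exactly what $\ML(\idis)\le\EMDL$ asserts.

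Combining the two inclusions yields $\EMDL\equiv\ML(\idis)$, as claimed. Since both directions are already in hand, there is no real obstacle at this stage; the only point worth flagging is that the argument is entirely at the level of \emph{global} definability ($\Vert\cdot\Vert$ over $\KT(\Phi)$), and one should note that, as an immediate consequence, the bisimulation-invariance characterization of Corollary~\ref{mlidis-exp} transfers verbatim to $\EMDL$.
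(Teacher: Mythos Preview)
Your proposal is correct and matches the paper's approach exactly: the paper derives the corollary in one sentence by combining Proposition~\ref{comparison} (which gives $\EMDL\le\ML(\idis)$) with Theorem~\ref{idis-emdl} (which gives $\ML(\idis)\le\EMDL$). Your additional unpacking of the construction behind Theorem~\ref{idis-emdl} is fine but not needed here, and your closing remark about transferring Corollary~\ref{mlidis-exp} to $\EMDL$ is precisely what the paper records next as Corollary~\ref{emdl-exp}.
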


\begin{corollary}\label{emdl-exp}
A class $\cK\subseteq\KT(\Phi)$ is definable in $\EMDL$ if and only if
$\cK$ is downward closed and  there is a $k\in\N$ such that $\cK$ is closed under $k$-bisimulation.
\end{corollary}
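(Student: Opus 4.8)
The final statement to prove is Corollary \ref{emdl-exp}, which characterizes $\EMDL$-definability. The plan is to simply combine results already established in the paper, so the ``proof'' is essentially a one-line deduction.

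First I would recall that Corollary \ref{mlidis-exp} gives exactly this characterization for $\ML(\idis)$: a class $\cK\subseteq\KT(\Phi)$ is definable in $\ML(\idis)$ if and only if $\cK$ is downward closed and closed under $k$-bisimulation for some $k\in\N$. Next, I would invoke the fact that $\EMDL\equiv\ML(\idis)$, which is the Corollary immediately preceding this one (itself obtained by combining Proposition \ref{comparison}, giving $\EMDL\le\ML(\idis)$, with Theorem \ref{idis-emdl}, giving $\ML(\idis)\le\EMDL$). Since two equally expressive logics define exactly the same classes of $\Phi$-models with teams, the class $\cK$ is definable in $\EMDL$ if and only if it is definable in $\ML(\idis)$, and then Corollary \ref{mlidis-exp} finishes the argument.

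There is no real obstacle here; the work was all done earlier. The only thing worth being slightly careful about is that $\calL\equiv\calL'$ as defined in the paper means that for every formula of one logic there is an equivalent formula of the other (equivalent in the sense of defining the same class $\Vert\cdot\Vert$), so indeed the collections $\{\Vert\varphi\Vert : \varphi\in\EMDL\}$ and $\{\Vert\varphi\Vert : \varphi\in\ML(\idis)\}$ coincide; hence ``definable in $\EMDL$'' and ``definable in $\ML(\idis)$'' are the same property of a class $\cK$. So the statement of Corollary \ref{emdl-exp} follows verbatim from Corollary \ref{mlidis-exp} by substituting $\EMDL$ for $\ML(\idis)$. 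I would write the proof in a single sentence to that effect, referencing the preceding Corollary and Corollary \ref{mlidis-exp}.

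\begin{proof}
By the previous corollary, $\EMDL\equiv\ML(\idis)$, so a class $\cK\subseteq\KT(\Phi)$ is definable in $\EMDL$ if and only if it is definable in $\ML(\idis)$. The claim now follows immediately from Corollary \ref{mlidis-exp}.
\end{proof}
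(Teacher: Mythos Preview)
Your proposal is correct and matches the paper's own justification: the paper states (just before the two corollaries) that since $\EMDL\equiv\ML(\idis)$, the characterization of Corollary~\ref{mlidis-exp} carries over to $\EMDL$, and then states Corollary~\ref{emdl-exp} without further proof. Your one-line argument is exactly this.
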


\section{Dimensions for modal formulas}

In this section we introduce two semantical invariants for formulas of
$\EMDL$ and $\ML(\idis)$. We will will first show that the truth of a formula $\varphi$
in a team $T$ of a Kripke model $K$ can be determined by considering only
subteams $T'\subseteq T$ of a fixed size $n$; we define the 
\emph{lower dimension} of $\varphi$ to be
the least $n$ such that this holds. Thus, lower dimension is a natural measure 
that can be used for classifying formulas with respect to their semantical complexity.
We also believe that lower dimension can be useful in analyzing the computational 
complexity of the model checking problem of modal formulas.

The other semantical invariant we introduce, the \emph{upper dimension} of a formula 
$\varphi$, is defined as the largest number of maximal teams $T$ that satisfy $\varphi$
in any single Kripke model $K$. We will show that the lower dimension of $\varphi$
is always less than or equal to the upper dimension. Moreover, we will show
that the upper dimension admits well-behaved estimates that are defined compositionally.
These estimates are very useful in establishing upper bounds for lower dimension
as well, since finding good estimates for the lower dimension directly is not
straightforward.

As we proved in the previous section, the expressive power of $\EMDL$ and $\ML(\idis)$ coincide.  However, there can be a considerable difference in the sizes of equivalent
formulas under any translation. 
It was already pointed out in \cite{EHMMVV13} that
there is an intrinsic difference in the complexity of $\EMDL$ and $\ML(\idis)$:
the satisfiability problem for the former is $\NEXP$-complete (\cite{EHMMVV13}), 
while for the latter it is $\PSPACE$-complete (\cite{Sevenster:2009}). 
This strongly hints to the possibility that there is no polynomially bounded
translation  from $\EMDL$ to $\ML(\idis)$.
\begin{comment}
It was already pointed out in \cite{EHMMVV13} that
no translation from $\EMDL$ to $\ML(\idis)$ can be polynomially bounded, unless
$\NEXP$ collapses to $\PSPACE$. This is because 
the satisfiability problem for $\EMDL$ is $\NEXP$-complete (\cite{EHMMVV13}), 
while for $\ML(\idis)$ the satisfiability problem is $\PSPACE$-complete (\cite{Sevenster:2009}). 
\end{comment}
Using the upper dimension, we will prove that this is indeed the case: 
any translation from $\EMDL$ to $\ML(\idis)$ introduces an exponential blow-up
for the size of formulas. 
%Our proof is based on a notion of dimension for formulas of $\ML(\idis)$ and $\EMDL$. 

\subsection{Lower and upper dimension}

Let $\varphi$ be a formula in $\ML(\idis)(\Phi)$, and let $n\in\N$. 
Adapting a notion that
was introduced by Jarmo Kontinen in \cite{jarmo} for first-order dependence logic,
we say that $\varphi$ is \emph{$n$-coherent} if the condition 
$$
	K,T\models\varphi\;\iff\; K,T'\models\varphi\text{ for all $T'\subseteq T$ 
	such that }|T'|\le n
$$
holds for all $(K,T)\in\KT(\Phi)$. 

It follows from Corollary \ref{mlidis-exp} 
that for every $\ML(\idis)(\Phi)$-formula $\varphi$ 
there is a natural number $n$ such that $\varphi$ is $n$-coherent. 
This can be seen as 
follows:  Let $k\in\N$ be such that $\Vert\varphi\Vert$ is closed under team $k$-bisimulation,
and let $n$ be the number of $\kbisim$-equivalence classes of pointed $\Phi$-models
$(K,w)$. If $K,T\models\varphi$, then by downward closure, $K,T'\models\varphi$ 
for every subteam $T'\subseteq T$. On the other hand, if $K,T\not\models\varphi$,
then $K,T'\not\models\varphi$ for any subteam $T'$ of $T$ such that for every
$w\in T$ there is $w'\in T'$ with $K,w\kbisim K,w'$. Clearly there is such a subteam
$T'$ with $|T'|\le n$.

Intuitively, the lower dimension of a formula $\varphi\in\ML(\idis)(\Phi)$ can be defined 
as the least $n$ such that $\varphi$ is $n$-coherent. However, due to technical reasons,
we formulate the definition of lower dimension in a bit different, but equivalent way.
Given a Kripke model $K$ over $\Phi$, let $N(\varphi,K)$ denote the family of minimal teams 
$T$ of $K$ such that $T\not\in\Vert\varphi\Vert^K$.

\begin{definition}
Let $\varphi\in\ML(\idis)(\Phi)$.  
The \emph{lower dimension} $\dim(\varphi)$ of $\varphi$ is the least $n\in\N$ such
that for every  Kripke model~$K$ over $\Phi$ and every $T\in N(\varphi,K)$
we have $|T|\le n$.
\end{definition}

We will next define the upper dimension for $\ML(\idis)$-formulas. 
Let $K$ be a Kripke model over $\Phi$ and let $\varphi$ an $\ML(\idis)(\Phi)$-formula.
As $\Vert \varphi \Vert^K$ is downward closed, it is natural to
study the family $M(\varphi,K)$ consisting of maximal elements of $\Vert \varphi \Vert^K$.
We will see below that $\Vert \varphi \Vert^K$ is \emph{generated} by  
$M(\varphi,K)$ in the sense that every team $T\in \Vert \varphi \Vert^K$ 
is contained in some team $S\in M(\varphi,K)$.  

\begin{definition}
 Let $\varphi\in\ML(\idis)(\Phi)$.  The 
\emph{upper dimension} $\Dim(\varphi)$ of $\varphi$ is the least $m\in\N$ such
that for every Kripke model~$K$ over $\Phi$ we have $|M(\varphi,K)|\le m$.
\end{definition}

Note that it is not a priori clear that the upper dimension is \emph{well-defined}:
if there is no uniform bound $m\in\N$ for the size of $M(\varphi,K)$ over
all Kripke models $K$, then $\Dim(\varphi)$ does not exist. In particular,
the definition of $\Dim(\varphi)$ requires that  $\Vert \varphi \Vert^K$ is
always \emph{finitely generated by} $M(\varphi,K)$, i.e., that $M(\varphi,K)$
is finite and generates $\Vert \varphi \Vert^K$ for all $K$. 

\begin{lemma} \label{dim-est}
$\Dim(\varphi)$ is well-defined for all $\varphi\in\ML(\idis)(\Phi)$. Moreover, 
we have the following estimates for $\varphi,\psi\in\ML(\idis)(\Phi)$:
\begin{multicols}{2}
\begin{enumerate}
\item 
$\Dim(p)=\Dim(\lnot p)=1$.

\item
$\Dim(\varphi\land\psi)\le \Dim(\varphi)\Dim(\psi)$.

\item
$\Dim(\varphi\lor\psi)\le \Dim(\varphi)\Dim(\psi)$.

\item
$\Dim(\varphi\idis\psi)\le \Dim(\varphi)+\nolinebreak\Dim(\psi)$.\nolinebreak

\item
$\Dim(\Diamond\varphi)\le\Dim(\varphi)$.

\item
$\Dim(\Box\varphi)\le\Dim(\varphi)$.

\end{enumerate}
\end{multicols}
\end{lemma}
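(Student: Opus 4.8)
The plan is to prove both the well-definedness of $\Dim(\varphi)$ and the estimates (i)--(vi) simultaneously by induction on the structure of $\varphi$. The key structural fact I would establish first, as a separate observation, is that if $\Vert\varphi\Vert^K$ is finitely generated by $M(\varphi,K)$ for every $K$, then $\Dim(\varphi)$ automatically exists: the point is that $M(\varphi,K)$ consists precisely of the maximal teams, and since $\Vert\varphi\Vert^K$ is downward closed by Proposition~\ref{down-cl}, every team in $\Vert\varphi\Vert^K$ sits below some maximal one exactly when there is no infinite ascending chain, which is what finite generation buys us. So the induction hypothesis to carry through each case is: $M(\varphi,K)$ is finite and generates $\Vert\varphi\Vert^K$, with the stated cardinality bound.

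For the base case (i), $\Vert p\Vert^K$ has the single maximal team $V(p)$, and $\Vert\lnot p\Vert^K$ has the single maximal team $W\setminus V(p)$, so $\Dim=1$. For (v) and (vi), the crucial point is that $\Box$ and $\Diamond$ interact with teams in a way that does not increase the number of maximal witnesses: for $\Box\varphi$, a team $T$ satisfies $\Box\varphi$ iff $R[T]\in\Vert\varphi\Vert^K$, i.e. iff $R[T]\subseteq S$ for some $S\in M(\varphi,K)$; each such $S$ yields the maximal team $\{w\in W\mid R[w]\subseteq S\}$, giving at most $\Dim(\varphi)$ maximal teams. For $\Diamond\varphi$, $T\models\Diamond\varphi$ iff there is $S\in M(\varphi,K)$ with $T[R]S$, and one checks that the maximal such $T$ relative to a fixed $S$ is $R^{-1}[S]$ (intersected appropriately so that $T\subseteq R^{-1}[S]$ and $S\subseteq R[T]$ both hold — here one needs to be slightly careful that $R^{-1}[S]$ genuinely witnesses $\Diamond\varphi$, which it does since $R[R^{-1}[S]]\supseteq S$ together with downward closure of $\Vert\varphi\Vert^K$). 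Again at most $\Dim(\varphi)$ maximal teams. For the conjunction (ii): a maximal team of $\Vert\varphi\land\psi\Vert^K$ must be of the form $S_\varphi\cap S_\psi$ for some $S_\varphi\in M(\varphi,K)$, $S_\psi\in M(\psi,K)$ (take any maximal extensions), so there are at most $\Dim(\varphi)\Dim(\psi)$ of them. For the intuitionistic disjunction (iv): $\Vert\varphi\idis\psi\Vert^K=\Vert\varphi\Vert^K\cup\Vert\psi\Vert^K$, whose maximal elements are among the maximal elements of the two unioned families, giving the additive bound $\Dim(\varphi)+\Dim(\psi)$.

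The main obstacle, and the case requiring the most care, is the split disjunction (iii), $\varphi\lor\psi$. Here $T\models\varphi\lor\psi$ iff $T=T_1\cup T_2$ with $T_1\models\varphi$ and $T_2\models\psi$; using downward closure we may take $T_1\subseteq S_\varphi$, $T_2\subseteq S_\psi$ for maximal $S_\varphi\in M(\varphi,K)$, $S_\psi\in M(\psi,K)$, so $T\subseteq S_\varphi\cup S_\psi$. Conversely $S_\varphi\cup S_\psi$ always lies in $\Vert\varphi\lor\psi\Vert^K$ (split it as $S_\varphi$ and $S_\psi$). Hence every team in $\Vert\varphi\lor\psi\Vert^K$ is contained in one of the at most $\Dim(\varphi)\Dim(\psi)$ teams of the form $S_\varphi\cup S_\psi$, so $M(\varphi\lor\psi,K)$ is finite (the maximal ones are a subset of these unions) and bounded by $\Dim(\varphi)\Dim(\psi)$. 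I would present this case in full detail since it is the one where the "generated by" direction and the product bound both need explicit verification, and I would remark that the bounds are typically far from tight — the real work of using these estimates comes later when bounding $\Dim$ of dependence atoms.
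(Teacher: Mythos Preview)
Your approach is essentially identical to the paper's: structural induction carrying the hypothesis that $M(\varphi,K)$ is finite and generates $\Vert\varphi\Vert^K$, with the same generating families $\{T\cap U\}$, $\{T\cup U\}$, $M(\varphi,K)\cup M(\psi,K)$, and $\{R^{-1}[T]\}$ in the respective cases. One small correction in your $\Diamond$ case: the containment $R[R^{-1}[S]]\supseteq S$ fails whenever some $s\in S$ has no $R$-predecessor, but your invocation of downward closure already points to the fix---take $S'=S\cap R[R^{-1}[S]]$ as the witnessing successor team, which lies in $\Vert\varphi\Vert^K$ and satisfies $R^{-1}[S]\,[R]\,S'$.
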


\begin{comment}
Next lemma shows that the upper dimension is well-defined, which is not a priori
clear, and gives some bounds for the sizes of dimensions.
Later we show that lower dimension is bounded by the upper, so both are
well-defined.

\begin{lemma} \label{dim-est}
 Let $\varphi$ be a $\ML(\idis)$-formula over $\Phi$.  Then for every Kripke
model over $\Phi$, we have that  $\Vert \varphi \Vert^K $
is finitely generated by $M(\varphi,K)$. Moreover, there is a uniform bound
for the size, i.e., $\Dim(\varphi)$ is well-defined. 

We have the following estimates where $\varphi$ and $\psi$ are $\ML(\idis)$-formulas
over $\Phi$:

\begin{multicols}{2}
\begin{enumerate}
\item 
For a literal $\varphi$, $\Dim(\varphi)=1$.

\item
$\Dim(\varphi\land\psi)\le \Dim(\varphi)\Dim(\psi)$.

\item
$\Dim(\varphi\lor\psi)\le \Dim(\varphi)\Dim(\psi)$.

\item
$\Dim(\varphi\idis\psi)\le \Dim(\varphi)+\nolinebreak\Dim(\psi)$.\nolinebreak

\item
$\Dim(\Diamond\varphi)\le\Dim(\varphi)$.

\item
$\Dim(\Box\varphi)\le\Dim(\varphi)$.

\end{enumerate}
\end{multicols}
\end{lemma}
\end{comment}

\begin{proof}
We prove the first claim and the dimension estimates simultaneously
by induction on  $\varphi$.  Let $K=(W,R,V)$ be an arbitrary
Kripke model over $\Phi$. We omit the cases for (i), (iii) and (vi), since (i) is trivial, and (iii) and (vi) are analogous to (ii) and (v), respectively.

\begin{enumerate}
\begin{comment}
\item 
If $\varphi$ is $p$ or $\lnot p$, then it easy to see that $M(\varphi,K)=\{T\}$ and
$\Vert \varphi \Vert^K$ is generated by $M(\varphi,K)$, i.e., $\Vert \varphi \Vert^K=\powerset{T}$,
where $T=V(p)$ or $T=W\setminus V(p)$, respectively.
%depending of whether $\varphi$ is $p$ on $\lnot p$.
Hence, $\Dim(\varphi)=1$.
\end{comment}
\setcounter{enumi}{1}
\item
We first notice that
   $
\Vert \varphi\land\psi \Vert^K = \Vert \varphi \Vert^K \cap \Vert \psi \Vert^K.
   $
By induction hypothesis, $\Vert \varphi \Vert^K$ and $\Vert \psi \Vert^K$
are finitely generated by $M(\varphi,K)$ and $M(\psi,K)$, respectively. 
Moreover, $|M(\varphi,K)|\le\Dim(\varphi)$ and
$|M(\psi,K)|\le\Dim(\psi)$.  It is immediate that
   $
M(\varphi\land\psi,K) \subseteq \{ T\cap U \mid T\in M(\varphi,K), U\in M(\psi,K) \}.
   $
   
Clearly, by the induction hypothesis the right-hand side of the inclusion above 
also generates the family $\Vert \varphi\land\psi \Vert^K$.
The inclusion now implies 
$|M(\varphi\land\psi,K)|\le |M(\varphi,K)\times M(\psi,K)|\le \Dim(\varphi)\Dim(\psi)$.
Hence, $\Dim(\varphi\land\psi)\le \Dim(\varphi)\Dim(\psi)$.

%\item
%The case of disjunction is similar to that of the conjunction.
\setcounter{enumi}{3}
\begin{comment}
We have
   $$
M(\varphi\lor\psi,K) \subseteq \{ T\cup U \mid T\in M(\varphi,K), U\in M(\psi,K) \}
   $$
and the right-hand side generates $\Vert \varphi\lor\psi \Vert^K$.
It follows that $|M(\varphi\lor\psi,K)|\le \Dim(\varphi)\Dim(\psi)$ and
$\Dim(\varphi\lor\psi)\le \Dim(\varphi)\Dim(\psi)$.
\end{comment}
\item
For the intuitionistic disjunction, it holds that
   $$
M(\varphi\idis\psi,K) \subseteq M(\varphi,K) \cup M(\psi,K)
   $$
and the right-hand side of the inclusion generates the family
$\Vert \varphi\idis\psi \Vert^K$.  The dimension estimate follows immediately.

\item
For the diamond, we have that
   $
M(\Diamond\psi,K) \subseteq \{ R^{-1}[T] \mid T\in M(\varphi,K)\},
   $
and that $\{ R^{-1}[T] \mid T\in M(\varphi,K)\}$ generates  $\Vert \Diamond\psi\Vert^K$.
Thus we get that $|M(\Diamond\psi,K)|\le |M(\varphi,K)|$, which implies that $\Dim(\Diamond\varphi)\le\Dim(\varphi)$.
%
%\item
%The case for the box is similar to that of diamond.
%For the box, we have
%
%   $$
%M(\Box\psi,K) \subseteq \{ W\setminus R^{-1}[W\setminus T] \mid T\in M(\varphi,K)\}.
%   $$
%The claim follows from this in the same way as in the previous case.
\end{enumerate}
%\phantom{Loppu}
\end{proof}

\begin{remark}
In \cite{ciardelli09}, Ciardelli gave estimates, that he calls \emph{Groenendijk's inequalities},
for the size of \emph{inquisitive meanings} of formulas. These estimates are essentially
equivalent to (i), (ii) and (iv) above.
In addition, he gave a similar estimate for the case of (intuitionistic) implication.
\end{remark}

The estimates given in Lemma~\ref{dim-est} are sharp in the sense that we cannot improve the upper bounds.
For conjunction (and implicitly also for the intuitionistic disjunction),
the following example demonstrates this sharpness.

\begin{example} \label{conjex}
Let $m$ and $n$ be positive integers.  We show that there are $\varphi,\psi\in\ML(\idis)$
such that 
$\Dim(\varphi)=m$, $\Dim(\psi)=n$ and $\Dim(\varphi\land\psi)=mn$.
Let $p_0,\ldots,p_{m-1},q_0,\ldots,q_{n-1}$ be distinct propositional symbols.
Put  
   $$
\varphi_i:=p_i\land \bigwedge_{k<m, k\neq i} \lnot p_k\quad\text{and}\quad
\psi_j:=q_j\land \bigwedge_{l< n, l\neq i}\lnot q_l,
   $$
for $i<m$ and  $j<n$.  Note that the formulas
$\varphi_i$, $i<m$, are satisfiable, but 
mutually contradictory in the classical sense, and similarly for $\psi_j$'s.
If $K=(W,R,V)$ is a Kripke model over $\{p_0,\ldots,p_{m-1},q_0,\ldots,q_{n-1}\}$,
then 
   $$
\Vert \varphi_i \Vert^K = \powerset{T_i} \quad\text{and}\quad
\Vert \varphi_j \Vert^K = \powerset{U_j}
   $$
for appropriate teams $T_i$ and $U_j$.
Clearly we can pick $K$ such that the intersections $T_i\cap U_j$ are all non-empty,
for $i<m$ and  $j<n$.
Define
   $$
\varphi:=\Idis_{i<m} \varphi_i \quad\text{and}\quad
\psi:=\Idis_{j<n} \psi_j.
   $$
The previous lemma gives the estimates $\Dim(\varphi)\le m$ and $\Dim(\psi)\le n$
for the upper dimensions.  However, in the Kripke model we have chosen,
   $$
\Vert \varphi \Vert^K = \bigcup_{i<m} \powerset{T_i}
\quad\text{and}\quad
\Vert \psi \Vert^K = \bigcup_{j<n} \powerset{U_j},
   $$
so $M(\varphi,K)=\{T_0,\ldots T_{m-1}\}$ and $M(\psi,K)=\{U_0,\ldots,U_{n-1}\}$,
which implies $\Dim(\varphi)=m$ and $\Dim(\psi)=n$.
Consider now the sentence $\varphi\land\psi$.  We have 
   $$
\Vert \varphi\land\psi \Vert^K 
= \bigcap_{i<m, j<n} \powerset{T_i\cap U_j},
   $$
so $M(\varphi\land\psi,K)=\{T_i\cap U_j \mid i<m, j<n\}$. 
Consequently, $\Dim(\varphi\land\psi)=mn$.
\end{example}

We will now prove that the upper dimension $\Dim(\varphi)$ is always a uniform
upper bound for $|N(\varphi,K)|$, whence $\dim(\varphi)$ is less than or equal to
$\Dim(\varphi)$.

\begin{lemma}
Assume that $\varphi\in\ML(\idis)(\Phi)$. Then $\dim(\varphi)\le \Dim(\varphi)$.
\end{lemma}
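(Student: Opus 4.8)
The plan is to show that for any Kripke model $K$ and any minimal team $T$ with $T \notin \Vert\varphi\Vert^K$, we have $|T| \le \Dim(\varphi)$. The key idea is that such a minimal non-satisfying team $T$ cannot be contained in any single maximal satisfying team $S \in M(\varphi,K)$, yet every proper subteam of $T$ \emph{is} contained in some member of $M(\varphi,K)$ (by minimality of $T$ and the fact, established in Lemma~\ref{dim-est}, that $\Vert\varphi\Vert^K$ is generated by $M(\varphi,K)$). So I would fix $K$, let $m = |M(\varphi,K)| \le \Dim(\varphi)$, write $M(\varphi,K) = \{S_1,\dots,S_m\}$, take $T \in N(\varphi,K)$, and aim to prove $|T| \le m$.

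First I would observe that for each element $w \in T$, the subteam $T \setminus \{w\}$ is a proper subteam of a minimal non-member, hence $T\setminus\{w\} \in \Vert\varphi\Vert^K$, so $T \setminus \{w\} \subseteq S_{j(w)}$ for some index $j(w) \in \{1,\dots,m\}$. The crucial point is that $w \notin S_{j(w)}$: if we had $w \in S_{j(w)}$, then $T = (T\setminus\{w\})\cup\{w\} \subseteq S_{j(w)}$, so $T \in \Vert\varphi\Vert^K$ by downward closure, contradicting $T \notin \Vert\varphi\Vert^K$. Next I would argue that the map $w \mapsto j(w)$ is injective: suppose $w \ne v$ are both in $T$ with $j(w) = j(v) = j$. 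Then $v \in T \setminus \{w\} \subseteq S_j$ but also $v \notin S_{j(v)} = S_j$, a contradiction. Hence $|T| \le m = |M(\varphi,K)| \le \Dim(\varphi)$, and since $K$ was arbitrary, $\dim(\varphi) \le \Dim(\varphi)$.

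The main thing to be careful about is the interplay between the two notions of minimality/maximality and ensuring $\Vert\varphi\Vert^K$ really is generated by $M(\varphi,K)$ — but this is exactly what Lemma~\ref{dim-est} delivers (together with the fact that $\Dim(\varphi)$ is well-defined, so $M(\varphi,K)$ is finite), so it can simply be invoked. One minor degenerate case worth handling explicitly is $T = \emptyset$: if the empty team fails $\varphi$, then $N(\varphi,K) = \{\emptyset\}$ and $|T| = 0 \le \Dim(\varphi)$ trivially; otherwise every $T \in N(\varphi,K)$ is nonempty and the argument above applies verbatim. I do not anticipate a serious obstacle here — the proof is a short combinatorial pigeonhole argument once the right correspondence between non-satisfying minimal teams and satisfying maximal teams is set up.
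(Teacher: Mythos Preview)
Your proof is correct and close in spirit to the paper's, though the combinatorial argument runs in the opposite direction. The paper picks, for each maximal team $S\in M(\varphi,K)$, a witness $w_S\in U\setminus S$ (which exists since $U\not\subseteq S$); the set $U_0=\{w_S\mid S\in M(\varphi,K)\}$ is then not contained in any maximal team, hence $U_0\notin\Vert\varphi\Vert^K$, and minimality of $U$ forces $U=U_0$, giving $|U|\le|M(\varphi,K)|$. You instead build an injection $T\to M(\varphi,K)$ by sending each $w$ to a maximal team containing $T\setminus\{w\}$. Both are short presentations of the same elementary fact---a minimal set not contained in any of $m$ given sets has at most $m$ elements---so neither gains anything over the other; they are simply dual. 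Your treatment of the degenerate case $T=\emptyset$ is harmless but in fact unnecessary here: in these logics the empty team satisfies every formula, so $\emptyset\notin N(\varphi,K)$.
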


\begin{proof}
Let $K$ be a Kripke model, and let $U\in N(\varphi,K)$.
We need to prove that $|U|\le\Dim(\varphi)$ 
(if there are no such sets $U$, there is nothing to prove).  
For each $T\in M(\varphi,K)$, pick a state $w_T\in U\setminus T$.
Then the set $U_0=\{ w_T \mid T\in M(\varphi,K)\}$ is a subset of $U$, but not included
in any $T\in M(\varphi,K)$.  Hence, $U_0\in N(\varphi,K)$ and by the minimality of $U$,
we get $U=U_0$ and $|U|=|U_0|\le |M(\varphi,K)|\le \Dim(\varphi)$.
Hence, $\dim(\varphi)\le \Dim(\varphi)$.  
\end{proof}

The next example shows that the gap between upper and lower dimension may be arbitrarily large.

\begin{example}
For $j<n$, let the formulas $\psi_j$, as well as the Kripke model~$K$
and sets $U_j $, be as in Example~\ref{conjex},  Assume that $n\ge 4$.
To simplify notation, write $\psi_n=\psi_0$ and $U_n=U_0$. Consider the sentence
   $$
\theta:=\Idis_{j<n} (\psi_j \lor \psi_{j+1}).
   $$
Lemma~\ref{dim-est} gives the estimate $\Dim(\theta)\le n$.  In the Kripke model~$K$,
it is easy to see that $M(\theta,K)=\{U_j\cup U_{j+1} \mid j<n\}$.
Hence, $\Dim(\theta)=n$.  However, if a team~$T$ is such that $K,T\not\models\theta$,
then there is either a single point~$w\in T$ such that $K,\{w\}\not\models\theta$,
or there are $w\in U_j$, $w'\in U_k$ with $j\not\equiv k\pmod{n}$.
In the latter case, $K,\{w,w'\}\not\models\theta$.  The same reasoning applies to
other Kripke models than~$K$, so $\dim(\theta)=2$.
\end{example}

\subsection{The dimension of dependence atoms}

As $\EMDL\equiv\ML(\idis)$ and the definition of the upper and lower dimensions
is purely semantical, $\Dim(\varphi)$ and $\dim(\varphi)$ are defined for every 
$\EMDL$-formula~$\varphi$. Moreover, the estimates given in Lemma~\ref{dim-est}
are valid also for $\EMDL$-formulas.
For the modal dependence atoms, we have the following estimate 
for the upper dimension:
 
\begin{lemma}\label{depat-dim}
For the dependence atoms of $\EMDL(\Phi)$, we have that
   $
\Dim(\dep[\psi_1,\dots,\psi_n,\theta]) \le 2^{2^n}.
   $
Moreover, equality holds if $\psi_i$, $1\le i\le n$, and $\theta$ are distinct proposition symbols.
\end{lemma}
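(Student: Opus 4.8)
The plan is to reduce satisfaction of the dependence atom to a combinatorial condition on the states of a fixed model, classified by their behaviour on $\psi_1,\dots,\psi_n$ and $\theta$. So fix a Kripke model $K=(W,R,V)$ over $\Phi$. For $w\in W$ call $P(w):=\{\,i\mid 1\le i\le n,\ K,\{w\}\models\psi_i\,\}\subseteq\{1,\dots,n\}$ the \emph{profile} of $w$, and let $t(w)\in\{0,1\}$ record whether $K,\{w\}\models\theta$; by Proposition~\ref{ML extension} both data depend only on single-state truth of the $\ML$-formulas $\psi_i$ and $\theta$. Unwinding Definition~\ref{semantics}, a team $T$ satisfies $\dep[\psi_1,\dots,\psi_n,\theta]$ in $K$ if and only if $t$ is constant on $\{\,w\in T\mid P(w)=p\,\}$ for every $p\subseteq\{1,\dots,n\}$; call such a $T$ \emph{good}. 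For each profile $p$ and $b\in\{0,1\}$ set $A^b_p:=\{\,w\in W\mid P(w)=p,\ t(w)=b\,\}$; these pairwise disjoint sets cover $W$, and $T$ is good exactly when $T\cap(A^0_p\cup A^1_p)$ is contained in $A^0_p$ or in $A^1_p$ for every $p$.

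Next I would pin down the maximal good teams. The claim is that the maximal elements of $\Vert\dep[\psi_1,\dots,\psi_n,\theta]\Vert^K$ are exactly the sets $\bigcup_{p\subseteq\{1,\dots,n\}}B_p$ where, for each $p$, $B_p$ is one of the two sides $A^0_p,A^1_p$ if both are non-empty, is the non-empty side if exactly one is non-empty, and is $\emptyset$ if both are empty. Each such set is good; it is maximal because any further state lies in the opposite side $A^{1-b}_p$ of its own profile $p$ (where $B_p=A^b_p$), which is then non-empty, so adjoining it makes both $\theta$-values occur within profile $p$. Conversely, any good team $T$ embeds into such a set — for each $p$ choose $B_p$ to contain the side of $A^0_p\cup A^1_p$ that $T$ meets, or any non-empty side if $T$ meets neither — so these are precisely the maximal good teams, $M(\dep[\psi_1,\dots,\psi_n,\theta],K)$ is finite, and it generates $\Vert\dep[\psi_1,\dots,\psi_n,\theta]\Vert^K$ (so $\Dim$ is well-defined here). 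The number of these teams is $\prod_p c_p$ with $c_p=2$ when both $A^0_p,A^1_p$ are non-empty and $c_p=1$ otherwise; since there are $2^n$ profiles, $|M(\dep[\psi_1,\dots,\psi_n,\theta],K)|\le 2^{2^n}$, hence $\Dim(\dep[\psi_1,\dots,\psi_n,\theta])\le 2^{2^n}$.

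For sharpness, assume $\psi_1,\dots,\psi_n,\theta$ are distinct proposition symbols $p_1,\dots,p_n,q\in\Phi$; I would build a model where $c_p=2$ for every profile. Let $W=\{\,w_{p,b}\mid p\subseteq\{1,\dots,n\},\ b\in\{0,1\}\,\}$ (so $2^{n+1}$ states), $R=\emptyset$, and $V$ chosen so that $w_{p,b}\in V(p_i)\iff i\in p$ and $w_{p,b}\in V(q)\iff b=1$, the remaining symbols of $\Phi$ being set arbitrarily. Then by Proposition~\ref{ML extension} we get $P(w_{p,b})=p$ and $t(w_{p,b})=b$, so $A^0_p=\{w_{p,0}\}$ and $A^1_p=\{w_{p,1}\}$ are both non-empty for each of the $2^n$ profiles; the distinct choices therefore give $2^{2^n}$ distinct maximal good teams, i.e.\ $|M(\dep[p_1,\dots,p_n,q],K)|=2^{2^n}$, and with the upper bound this gives $\Dim(\dep[p_1,\dots,p_n,q])=2^{2^n}$.

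I expect the only real obstacle to be getting the description of maximal good teams exactly right — in particular distinguishing profiles $p$ where one of $A^0_p,A^1_p$ is empty (contributing just a single choice) from those where both are non-empty (a genuine binary choice) — since it is this bookkeeping, not any deep idea, that yields the exact value $2^{2^n}$ rather than a cruder bound.
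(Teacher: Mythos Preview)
Your proof is correct and follows essentially the same approach as the paper: both classify states by their $\Psi$-type (your ``profile'') and $\theta$-value, observe that maximal satisfying teams correspond to functions from the set of realized types into $\{0,1\}$, and count. Your version is in fact more explicit and careful than the paper's---you spell out the exact structure of the maximal teams and handle the degenerate profiles where one side is empty, whereas the paper just notes the injection $T\mapsto f_T$ and sketches the sharpness model---but the underlying argument is the same.
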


\begin{proof}
Denote the set $\{\psi_1,\ldots,\psi_n\}$ by $\Psi$ and the dependence atom  
$\dep[\psi_1,\dots,\psi_n,\theta]$ by $\varphi$. 
let $K=(W,R,V)$ be a Kripke model over $\Phi$, and let $X=\{\type_\Psi(K,w)\mid w\in W\}$,
where $\type_\Psi(K,w)$ is the $\Psi$-type of $w$ in $K$ (see Section \ref{emdl-idis}).
If $T\in M(\varphi,K)$, then there is a function
$f_T: X\to \{\bot,\top\}$ such that for all $w\in W$ 
$$
	M,w\models\theta\quad\iff\quad f_T(\type_\Psi(K,w))=\top.
$$

If $T$ and $U$ are different elements of $M(\varphi,K)$, then $T\cup U\not\in\Vert\varphi\Vert^K$,
whence there are states $w\in T$ and $u\in U$ such that $\type_\Psi(K,w)=\type_\Psi(K,u)$, but
$K,w\models\theta\;\iff\; K,u\not\models\theta$. This means that $f_T\not= f_U$. Thus, we see 
that $M(\varphi,K)$ has at most $ 2^{|X|}$ elements. Since $X\subseteq \powerset\Psi$
and $|\Psi|=n$, we arrive at the upper bound $2^{2^n}$ for $|M(\varphi,K)|$.

For the second claim, note that if $\psi_i\in\Phi$, $1\le i\le n$, and $\theta\in\Phi$ are distinct,
then there is a Kripke model such that every $\Gamma\subseteq\Psi$ is the $\Psi$-type
of some $w$ in $K$, and for every $f:X\to\{\bot,\top\}$ there is a team $T\in M(\varphi,K)$
such that $f=f_T$. Then $|X|=2^n$, and hence $|M(\varphi,K)|=2^{|X|}=2^{2^n}$.
\end{proof}

Thus, the upper dimension of dependence atoms can be doubly exponential with 
respect to the number of formulas occurring in it. On the other hand,  
any $\ML(\idis)$-formula can reach only single exponential upper dimension
with respect to its size. We prove this by considering 
the number $\occ_{\idis}(\varphi)$ of occurrences 
of $\idis$-symbols in the formula~$\varphi$.

\begin{proposition}\label{idis-occ}
Let $\varphi\in\ML(\idis)$. Then $\Dim(\varphi)\le 2^{\occ_{\idis}(\varphi)}$.  
\end{proposition}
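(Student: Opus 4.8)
The plan is to prove the bound by induction on the structure of $\varphi$, using exactly the compositional estimates already established in Lemma~\ref{dim-est}. The key observation is that the estimates in Lemma~\ref{dim-est} for the operations $\land$, $\lor$, $\Diamond$ and $\Box$ never increase the relevant upper bound beyond the product of the bounds for the subformulas, while the estimate for $\idis$ — the only connective that actually contributes to $\occ_{\idis}$ — turns a product-type bound into a sum, which is precisely where the single exponential in the number of $\idis$-occurrences comes from.

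First I would set $d := \occ_{\idis}(\varphi)$ and argue by induction on $\varphi$ that $\Dim(\varphi) \le 2^{\occ_{\idis}(\varphi)}$. For the base case, if $\varphi$ is a literal $p$ or $\lnot p$, then $\occ_{\idis}(\varphi) = 0$ and by Lemma~\ref{dim-est}(i) we have $\Dim(\varphi) = 1 = 2^0$. For the inductive step with $\varphi = \psi \land \theta$ (and symmetrically $\varphi = \psi \lor \theta$), we have $\occ_{\idis}(\varphi) = \occ_{\idis}(\psi) + \occ_{\idis}(\theta)$, and by Lemma~\ref{dim-est}(ii)--(iii) together with the induction hypothesis,
\[
	\Dim(\varphi) \le \Dim(\psi)\Dim(\theta) \le 2^{\occ_{\idis}(\psi)} \cdot 2^{\occ_{\idis}(\theta)} = 2^{\occ_{\idis}(\psi)+\occ_{\idis}(\theta)} = 2^{\occ_{\idis}(\varphi)}.
\]
For $\varphi = \Diamond\psi$ or $\varphi = \Box\psi$, we have $\occ_{\idis}(\varphi) = \occ_{\idis}(\psi)$, and Lemma~\ref{dim-est}(v)--(vi) give $\Dim(\varphi) \le \Dim(\psi) \le 2^{\occ_{\idis}(\psi)} = 2^{\occ_{\idis}(\varphi)}$.

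The interesting case is $\varphi = \psi \idis \theta$, where $\occ_{\idis}(\varphi) = \occ_{\idis}(\psi) + \occ_{\idis}(\theta) + 1$. By Lemma~\ref{dim-est}(iv) and the induction hypothesis,
\[
	\Dim(\varphi) \le \Dim(\psi) + \Dim(\theta) \le 2^{\occ_{\idis}(\psi)} + 2^{\occ_{\idis}(\theta)} \le 2 \cdot 2^{\max\{\occ_{\idis}(\psi),\,\occ_{\idis}(\theta)\}} \le 2^{\occ_{\idis}(\psi)+\occ_{\idis}(\theta)+1} = 2^{\occ_{\idis}(\varphi)},
\]
where the last inequality uses that $\max\{a,b\} \le a+b$ for natural numbers $a,b$. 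This completes the induction. There is essentially no hard part: the entire content is packaged into Lemma~\ref{dim-est}, and the only thing one must be slightly careful about is checking that $2^a + 2^b \le 2^{a+b+1}$, which follows since both summands are at most $2^{\max\{a,b\}} \le 2^{a+b}$. I would also note that well-definedness of $\Dim(\varphi)$ throughout is guaranteed by the first clause of Lemma~\ref{dim-est}.
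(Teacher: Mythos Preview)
Your proof is correct and follows essentially the same approach as the paper: a structural induction using the estimates of Lemma~\ref{dim-est}, with the only nontrivial case being $\idis$. The sole cosmetic difference is in the arithmetic for that case—where you bound $2^a+2^b$ via $2\cdot 2^{\max\{a,b\}}\le 2^{a+b+1}$, the paper goes through $2^a+2^b\le 2^a\cdot 2^b+1\le 2\cdot 2^a\cdot 2^b$—but the argument is otherwise identical.
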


\begin{proof}
The proof is a straightforward application of Lemma~\ref{dim-est} and induction.
For the literals, we have
   $$
\Dim(p)=\Dim(\lnot p)=1=2^0=2^{\occ_{\idis}(p)}=2^{\occ_{\idis}(\lnot p)}.
   $$
Suppose $\Dim(\varphi)\le 2^{\occ_{\idis}(\varphi)}$ and
$\Dim(\psi)\le 2^{\occ_{\idis}(\psi)}$.    
Then 
\begin{align*}
\Dim(\varphi\land\psi)& \le   \Dim(\varphi)\cdot\Dim(\psi) \\
                   &\le 2^{\occ_{\idis}(\varphi)}\cdot 2^{\occ_{\idis}(\psi)}
                      = 2^{\occ_{\idis}(\varphi)+\occ_{\idis}(\psi)} 
                      = 2^{\occ_{\idis}(\varphi\land\psi)},\\
\Dim(\varphi\lor\psi) &\le\Dim(\varphi)\cdot\Dim(\psi)\le  2^{\occ_{\idis}(\varphi)}\cdot 2^{\occ_{\idis}(\psi)}
   = 2^{\occ_{\idis}(\varphi\lor\psi)} \text{ and}\\
\Dim(\varphi\idis\psi) & \le \Dim(\varphi) + \Dim(\psi) \le  2^{\occ_{\idis}(\varphi)}+2^{\occ_{\idis}(\psi)} \\
                    & \le 2^{\occ_{\idis}(\varphi)}\cdot 2^{\occ_{\idis}(\psi)}+1 
                      \le 2^{\occ_{\idis}(\varphi)}\cdot 2^{\occ_{\idis}(\psi)}\cdot 2  \\ 
                    & =2^{\occ_{\idis}(\varphi)+\occ_{\idis}(\psi)+1} = 2^{\occ_{\idis}(\varphi\idis\psi)}.
\end{align*}
%Note that for the case of the intuitionistic disjunction, we use the elementary inequality
%$m+n\le m\cdot mn+1$ valid for all positive integers $m$ and $n$.
The case of the modal operators is trivial.
\end{proof}

\begin{theorem}
Assume that $\varphi\in\ML(\idis)$ is a formula such that 
$\Vert\varphi\Vert=\Vert\dep[p_1,\dots,p_{n},q]\Vert$. Then $\varphi$ 
contains more than $2^n$ symbols.
\end{theorem}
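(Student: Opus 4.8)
The plan is to combine the two dimension bounds established just above. By Lemma~\ref{depat-dim}, since $p_1,\dots,p_n,q$ are distinct proposition symbols, $\Dim(\dep[p_1,\dots,p_n,q])=2^{2^n}$. Because the upper dimension is a purely semantical invariant, any formula $\varphi\in\ML(\idis)$ with $\Vert\varphi\Vert=\Vert\dep[p_1,\dots,p_n,q]\Vert$ must satisfy $\Dim(\varphi)=2^{2^n}$ as well. On the other hand, Proposition~\ref{idis-occ} gives $\Dim(\varphi)\le 2^{\occ_{\idis}(\varphi)}$, so we obtain $2^{2^n}\le 2^{\occ_{\idis}(\varphi)}$, i.e.\ $\occ_{\idis}(\varphi)\ge 2^n$.

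It remains to translate this lower bound on the number of $\idis$-occurrences into a lower bound on the total number of symbols in $\varphi$. Since every occurrence of the symbol $\idis$ is itself one of the symbols counted in the size of $\varphi$, the size of $\varphi$ is at least $\occ_{\idis}(\varphi)\ge 2^n$. To get the strict inequality ``more than $2^n$'' one observes that $\varphi$ cannot consist solely of $\idis$-symbols: a well-formed $\ML(\idis)$-formula built using $d$ occurrences of the binary connective $\idis$ must contain at least $d+1$ further subformula-occurrences at the leaves (atoms $p$ or $\lnot p$), so in fact the size of $\varphi$ is at least $2\occ_{\idis}(\varphi)+1>2^n$ whenever $\occ_{\idis}(\varphi)\ge 1$; and $\occ_{\idis}(\varphi)\ge 2^n\ge 1$. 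Hence $\varphi$ contains more than $2^n$ symbols.

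I do not anticipate a serious obstacle here: the two quantitative ingredients are already in place, and the only mild care needed is the bookkeeping in the last step — making precise what ``number of symbols'' means and checking that a formula with at least one $\idis$ necessarily has strictly more symbols than it has $\idis$-occurrences (because of the mandatory atomic leaves, or simply because each binary connective needs two arguments). One could alternatively phrase the final step loosely — ``$\varphi$ has at least $\occ_{\idis}(\varphi)\ge 2^n$ symbols, and clearly at least one more, since a formula equivalent to a nontrivial dependence atom cannot be a single connective symbol'' — which already suffices for the stated ``more than $2^n$''.
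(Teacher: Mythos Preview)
Your proposal is correct and follows essentially the same route as the paper: invoke Lemma~\ref{depat-dim} to get $\Dim(\varphi)=2^{2^n}$, then Proposition~\ref{idis-occ} to obtain $\occ_{\idis}(\varphi)\ge 2^n$, and conclude. Your final paragraph is in fact slightly more careful than the paper's own proof, which simply stops at ``$\varphi$ contains at least $2^n$ intuitionistic disjunction symbols'' and leaves the passage to ``more than $2^n$ symbols'' implicit.
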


\begin{proof}
By Lemma~\ref{depat-dim},   $\Dim(\varphi)=\Dim(\dep[p_1,\dots,p_{n},q])=2^{2^n}$.
Thus, by Proposition \ref{idis-occ}, 
$2^{2^n}\le 2^{\occ_{\idis}(\varphi)}$ implying 
$2^n\le \occ_{\idis}(\varphi)$. This means that $\varphi$ contains at least $2^n$
intuitionistic disjunction symbols.
\end{proof}

Thus, any translation from $\EMDL$ to $\ML(\idis)$ necessarily leads to an exponential
blow-up in the size of formulas.

\section{Summary}

We studied the expressive power of various modal logics with team semantics: modal logic with intuitionistic disjunction $\ML(\varovee)$, modal dependence logic $\MDL$, and extended modal dependence logic $\EMDL$. We introduced the notion of team bisimulation and showed that a class $\cK$ of Kripke structures with teams is definable by a sentence of $\ML(\idis)$ if and only if $\cK$ is downward closed and closed under team $k$-bisimulation. In addition, we established that the expressive power of $\ML(\idis)$ and $\EMDL$ coincide and thus answered an open problem from \cite{EHMMVV13}. Furthermore, we introduced novel semantical invariants for formulas of $\EMDL$ and $\ML(\varovee)$, i.e., the notions of upper and lower dimension. By using these invariants, we obtained that the translations from $\MDL$ and $\EMDL$ into $\ML(\idis)$ are always worst-case exponential.

The characterization of the expressive power of $\EMDL$ and $\ML(\idis)$ gives rise to the question
whether similar characterizations can be found for other modal logics with team semantics. In particular,
is there such a characterization for the extension of $\ML$ with inclusion atoms or independence atoms?
For the definitions of these atoms, see the Ph.D. thesis \cite{fanthesis} of Fan Yang. 

%% Bibliography
%% Make sure to use the bibliographystyle aiml14.
\bibliographystyle{aiml14}
\bibliography{aiml14}
\clearpage

\end{document}